\title{Modified Poisson-Nernst-Planck model with accurate Coulomb correlation in variable media}
\author{Pei Liu
\thanks{School of Mathematical Sciences and Institute of Natural Sciences,
  Shanghai Jiao Tong University, Shanghai 200240, China ({\tt hgliupei1990@sjtu.edu.cn}). Current address: Department of Mathematics, The Penn State University, University Park, PA 16802, USA.}
\and
Xia Ji
  \thanks{Academy of Mathematics and Systems Science, Chinese Academy of Sciences, Beijing 100190, China
    ({\tt jixia@lsec.cc.ac.cn}).}
\and
Zhenli Xu
\thanks{School of Mathematical Sciences, Institute of Natural Sciences,
  and MoE Key Lab of Scientific and Engineering Computing,
  Shanghai Jiao Tong University, Shanghai 200240, China ({\tt xuzl@sjtu.edu.cn}).}
}
\begin{document}

\maketitle

\begin{abstract}
We derive a set of modified Poisson-Nernst-Planck (mPNP) equations for ion transport from the variation of the free energy functional which includes the many-body Coulomb correlation in media of variable dielectric coefficient. The correlation effects are considered through the Debye charging process in which the self energy of an ion is governed by the generalized Debye-H\"uckel equation. We develop the asymptotic expansions of the self energy taking the ion radius as the small parameter such that the multiscale model can be solved efficiently by numerical methods. It is shown that the variations of the energy functional give the self-energy-modified PNP equations which satisfy a proper weak formulation. We present numerical results from different asymptotic expansions with a semi-implicit conservative numerical method and investigate the effect of the Coulomb correlation.
\end{abstract}

\begin{keywords}
Charge transport; ion correlation; continuum theory for electrostatics; asymptotic expansion; Green's function; Poisson-Nernst-Planck equations
\end{keywords}

\begin{AMS} 	
82C21;    
82D15;    
35Q92;    
49S05     
\end{AMS}

\pagestyle{myheadings}
\thispagestyle{plain}
\markboth{P. Liu, X. Ji and Z. Xu}
{Modified Poisson-Nernst-Planck model}

\section{Introduction}

The understanding of the transport of charged particles in micro/nanoscale systems is of importance in many natural phenomena and has wide engineering applications \cite{FPP+:RMP:2010}. The Poisson-Nernst-Planck (PNP) theory, which is composed of the Nernst-Planck equations for the ionic flux under the concentration gradient and electrical field, and the Poisson equation for the electrical potential, is a classical continuum model to describe the ionic transport in solution and has been shown to be successful in lots of applications including nanofluidics and microfluidics \cite{Schoch:RMP:08,BTA:PRE:04}, semi-conductor devices \cite{MRS:S:90} and transmembrane ion channels \cite{Eisenberg:ACPip:2011a}.

The traditional PNP model cannot make correct prediction for systems with divalent ions or with a dielectric boundary because it treats ions as point charges without short-range size effect and ignores the long-range Coulomb correlation. The improvement of the mean-field approximation has attracted wide attention in recent decades \cite{MRA:BJ:03,CKC:BJ:03, KBA:PRE:2007, EHL:JCP:2010, LZ:BJ:2011,JL:2012:JDDE, LTZ:2012:JDDE,Gillespie:MN:14, Frydel:ACP:2016}. The steric effect between particles can be considered by introducing the excess free energy functional given by solvent entropy \cite{LZ:BJ:2011}, Lennard-Jones kernel \cite{EHL:JCP:2010}, Carnahan-Starling local density approximation \cite{Giera2013,Giera2015} or modified fundamental measure theory \cite{Wu:JCP:2002,Roth:JPC:2002}, see reviews  \cite{BKS+:ACIS:2009,Gillespie:MN:14} for discussions of this issue. For the long-ranged Coulomb correlation, Bazant {\it et al.} \cite{BSK:PRL:2011} proposed a fourth-order Poisson-Boltzmann (PB) model by introducing a parameter of correlation length, which has been further developed recently \cite{bazant2012,LE:JPCB:13,Liu2016}. Another promising routine is by introducing a self-energy correction to the potential of mean force using the generalized Debye-H\"uckel equation, which can be derived under the point-charge approximation (PCA) by the Debye closure of the BBGKY hierarchy  \cite{AM:CJU:1986}, random phase approximation in density functional theory \cite{FM:PRE:2016} or the Gaussian variational field theory \cite{podgornik1989jcp,NO:EPJE:2000,NO:EPJE:2003} for equilibrium systems. The PDEs under the PCA are unstable \cite{XuMaggs:JCP:14} for strong-coupling systems because cations and anions may collapse when the interaction is strong. To go beyond PCA, Wang \cite{Wang:PRE:2010} used the variational field theory to derive a generalized Debye-H\"uckel equation for finite-sized ions with Gaussian charge distribution, which first revealed the importance of the ionic self energy for mobile ions with finite size and obtained the Born solvation energy in variable dielectric media by performing expansion using ionic radius as small parameter, and provides the theoretical framework for accounting for these many-body effects. It should be noted that the dielectric permittivity in electrolytes could depend on the ionic concentration as well as the electrical field \cite{Bikerman:PM:1942,Grahame:CR:1947,BKS+:ACIS:2009,hatlo2012,LWZ:CMS:14,Guan:PRE:16}, thus the Born solvation energy should play important role due to strong variations in the ion concentration or the electric field. Accounting for the Born energy and ionic correlation in a continuum model is essential to explain many phenomenon such as the nonmonotonic concentration dependence of the mean activity coefficient of simple electrolytes \cite{Vincze2010} and quantitative investigation of single ion activities \cite{liu2015poisson}. The extension of these models from equilibrium to time-dependent dynamics with the Fick's law is straightforward, and results in the modified PNP equations.

In this work, we derive the system free energy through the Debye charging process \cite{DH:PZ:1923b}, which accounts for the Coulomb correlation of ions with finite size and the dielectric effect through a self energy term. The self energy is described by the generalized Debye-H\"uckel equation with variable coefficients. When finite ion size is considered, the equation is high-dimensional as the local dielectric coefficient is modified by the located ion, i.e., six dimensions for real 3 dimensional systems. Furthermore, the difference between ionic size and the Debye screening length leads to a multi-scale problem. To overcome these difficulties, asymptotic expansions of the self energy by taking the ionic size as a small parameter. The resulting modified PB/PNP equations or similar versions had been used by different authors to investigate important physical features such as the double layer structure and like-charge attraction \cite{WangRui:JCP:13,MX:JCP:14,WW:JCP:2015,LMX:CiCP:17}. With the asymptotic self-energy expression, we find the connection between the chemical potential and self energy, so that the modified PNP equation can preserve a simple form and satisfy a proper energy dissipation law thus a weak formuation. We compare the results by different kind of approximations, and demonstrate the new formulations can predict correct physics in more accurate.

On the other hand, numerical solutions of the PNP and the modified PNP equations can be very challenging, and have attracted much interest recently \cite{Flavell:JCE:14,LW:JCP:14}. The modified PNP equations subject to suitable boundary conditions make new difficulties in numerical solutions, since the correlation function may lead to the local condensation of ions and the non-monotonic ionic distributions. In Section 3, we develop a semi-implicit numerical method to solve the modified PNP equations with the mass conservation. We perform numerical example to show that the method can capture correct physical features.

\section{Coulomb correlation in variable media}
To take into account the correlation contribution in the PNP framework, we start from the free energy formulation. The modified PNP equations are derived through the energetic variational approach \cite{EHL:JCP:2010,Xu:2014:CMS} such that the energy dissipation law and a proper weak formulation will be automatically fulfilled.

Let $\Omega$ be the electrolyte region and $\Gamma= \partial \Omega$ be its boundary. Suppose there are $N$ ionic species in the electrolyte with $z_i$ being the valence of species $i$ and each ion is modeled as a hard sphere with radius $a$ and a point charge in the center. In the mean-field theory, the free energy functional is given by,
\begin{equation}
F^{\mathrm{mf}}[c_1,c_2,\cdots,c_N] =  \int_{\Omega} \left[ \frac{\varepsilon(\mathbf{r})}{2} |\nabla \phi(\mathbf{r}) |^2+\sum_{i=1}^N k_B T c_i(\mathbf{r}) [\log c_i(\mathbf{r})-1]  \right] d\mathbf{r},
\label{PB_energy}
\end{equation}
where $c_i(\mathbf{r})$ is the concentration distribution of the $i$th ion species, $k_B$ is the Boltzmann constant, $T$ is the temperature, $\varepsilon(\mathbf{r})$ is the space dependent dielectric permittivity,
and the mean electrical potential $\phi(\mathbf{r})$ satisfies the Poisson's equation,
\begin{equation}
-\nabla \cdot \varepsilon(\mathbf{r}) \nabla \phi(\mathbf{r}) = \sum_{i=1}^N z_i  c_i(\mathbf{r}),
\end{equation}
with proper boundary conditions. Since the dielectric environment is inhomogeneous, we should consider the contribution $F^{\mathrm{Born}}= \sum_{i=1}^N \int_{\Omega} c_i(\mathbf{r}) U_i^{\mathrm{Born}}(\mathbf{r}) d\mathbf{r}$ from the Born energy of each ion,
\begin{equation}
\displaystyle U_i^{\mathrm{Born}}(\mathbf{r})=\frac{z_i^2e^2}{2}\lim_{\mathbf{r}'\to\mathbf{r}} \left[G_0(\mathbf{r},\mathbf{r}')- \frac{1}{4\pi \varepsilon_0 |\mathbf{r}-\mathbf{r}'|} \right], \label{born}
\end{equation}
with $G_0$ defined below in Eq. \eqref{g0}.
Therefore the finiteness of the ionic radius $a$ is essential for variable media as the zero-radius limit leads to the divergence of the Born solvation energy. Here $\varepsilon$ is not assumed to be concentration dependent nor field dependent for simplicity, or we need extra terms to describe the polarization.

\subsection{Debye charging process}
The Debye charging process considers a hypothetical process, in which all ions are charged with the same rate from zero charge to its full charge. The work done during the process is equal to the free energy difference and thus can be used to calculate ionic chemical potential or activity coefficients \cite{DH:PZ:1923b}.
When a source charge is located at $\mathbf{r'}$, the dielectric permittivity is locally dependent on the source location due to the finite size of the ion,
\begin{equation}
\epsilon(\mathbf{r},\mathbf{r}') = \begin{cases}
\varepsilon_0,\hspace{1cm}  |\mathbf{r}-\mathbf{r}'|<a,\\
\varepsilon(\mathbf{r}),\hspace{0.7cm} \text{otherwise}.
\end{cases}
\end{equation}
We choose the reference system to be the ideal gas, i.e., its particle-particle interaction is  $v_0(\mathbf{r},\mathbf{r}')=0$, and increase the valences of all particles up to full charges: At stage $\lambda$ with $0\leq\lambda\leq1$, the valence of the $i$th species becomes $\lambda z_i$, and the pairwise potential is thus,
\begin{equation}
v_{ij}(\mathbf{r},\mathbf{r}';\lambda)= \lambda^2 v_{ij}(\mathbf{r},\mathbf{r}') =\lambda^2 z_i z_j e^2 G_0(\mathbf{r},\mathbf{r}'),
\end{equation}
where $G_0$ is governed by,
\begin{equation}
-\nabla \cdot \epsilon(\mathbf{r},\mathbf{r}') \nabla G_0(\mathbf{r},\mathbf{r}')= \delta(\mathbf{r}-\mathbf{r'}). \label{g0}
\end{equation}
When $\lambda=0$, it is the ideal gas reference system, and when $\lambda=1$, it becomes the real electrolyte system of interest.

We introduce the standard perturbation theory for many-body systems \cite{HM::2006}.
Let $F^{\mathrm{corr}}$ be the correlation free energy as a functional of single particle density. Then it can be expressed as,
\begin{equation}
F^{\mathrm{corr}}[c_1,c_2,\cdots,c_N] = \frac{1}{2} \sum_{i,j=1}^N \int_0^1 d\lambda^2 \iint_\Omega c_i(\mathbf{r}) c_j (\mathbf{r'})h_{ij}(\mathbf{r},\mathbf{r}';\lambda) v_{ij}(\mathbf{r},\mathbf{r}')  d\mathbf{r}d\mathbf{r}',
\label{charging}
\end{equation}
where $h_{ij}(\mathbf{r},\mathbf{r}';\lambda)=c_{ij}(\mathbf{r},\mathbf{r}';\lambda)/c_i(\mathbf{r})c_j(\mathbf{r'})-1$ is the total correlation function, with $c_{ij}(\mathbf{r},\mathbf{r}';\lambda)$ being the two-particle density distribution function, which can be viewed as the joint distribution of $c_i(\mathbf{r})$ and $c_j(\mathbf{r'})$. In the mean-field theory $c_{ij}(\mathbf{r},\mathbf{r}';\lambda) \approx c_i(\mathbf{r})c_j(\mathbf{r'})$ and $h_{ij}(\mathbf{r},\mathbf{r}';\lambda) \approx 0$, meaning no correlation is included. In our model, we are motivated to go beyond mean-field theory and take into account the electrostatic correlation between ions. We describe the total free energy as,
\begin{eqnarray}
F[c_1,c_2,\cdots,c_N]  & = & \int \left[ \frac{\varepsilon(\mathbf{r})}{2} |\nabla \phi(\mathbf{r}) |^2+\sum_{i=1}^N k_B T c_i(\mathbf{r}) [\log c_i(\mathbf{r})-1]  \right] d\mathbf{r} \nonumber \\
&&+  \int_\Omega d\mathbf{r} \int_0^1 d\lambda \left(2\lambda \sum_{i=1}^N c_i(\mathbf{r}) U_i(\mathbf{r};\lambda)\right).
\label{eng}
\end{eqnarray}
The second term is the correction to the classical mean field theory, where the self-energy  $U_i$  includes the Born solvation energy \eqref{born} and electrostatic correlation \eqref{charging},
\begin{equation}
U_i(\mathbf{r};\lambda) =  U_i^{\mathrm{Born}}(\mathbf{r})+ \frac{1}{2}\sum_{j=1}^N  \int_\Omega c_j (\mathbf{r'})h_{ij}(\mathbf{r},\mathbf{r}';\lambda) v_{ij}(\mathbf{r},\mathbf{r}') d\mathbf{r'}.
\label{self_a}
\end{equation}

\subsection{Self energy}

The self energy \eqref{self_a} is treated at the Debye-H\"uckel level, which means only the test ion is considered with finite radius while the surrounding mobile ions are still treated as point charges.  Let $B(\mathbf{r}')=\{\mathbf{r}: |\mathbf{r}-\mathbf{r}'|<a\}$ be the spherical domain of a test ion located at $\mathbf{r}'$, and $\chi(\mathbf{r}-\mathbf{r}')$ is the indicator function of its complementary set $B^c=\mathbb{R}^3\setminus B$, i.e., $\chi=0$ for $\mathbf{r}\in B(\mathbf{r}')$ and 1 elsewhere. Assume adding an hard sphere at location $\mathbf{r}$ will not change the surrounding ionic distribution except the interior domain of the ion sphere is inaccessible, i.e. the short-range hard core correlation is ignored. Then replace the hard sphere with  the $i$th ion, as a consequence, the electrostatic potential will change by $\phi^\delta_\lambda$, which satisfies,
\begin{equation}
-\nabla \cdot \epsilon(\mathbf{r},\mathbf{r}') \nabla \phi^\delta_\lambda(\mathbf{r};\mathbf{r'}) = \chi(\mathbf{r}-\mathbf{r}') \sum_j  \lambda z_j  e c_j(\mathbf{r}) h_{ij}(\mathbf{r},\mathbf{r'};\lambda) +\lambda z_i e \delta(\mathbf{r}-\mathbf{r'}),
\end{equation}
where the first term on the right hand side is the charge fluctuation due to  the insert of the
ion and the second term represents the point charge within the ion.
We use the Loeb closure \cite{loeb1951} in domain $B^c(\mathbf{r})$,
\begin{equation}
h_{ij}(\mathbf{r},\mathbf{r'};\lambda)\approx-\beta \lambda  z_j e \phi^\delta_\lambda(\mathbf{r};\mathbf{r'}),
\end{equation}
where $\beta=1/k_B T$ is the inverse thermal energy. Then we obtain the generalized Debye-H\"uckel (GDH) equation \cite{XML:PRE:2014,MX:JCP:14} with $\phi^\delta_\lambda=\lambda z_i e G_\lambda$,
\begin{equation}
-\nabla \cdot \epsilon(\mathbf{r},\mathbf{r}') \nabla G_\lambda(\mathbf{r};\mathbf{r'}) + 2\lambda^2 \chi(\mathbf{r}-\mathbf{r}') I(\mathbf{r}) G_\lambda(\mathbf{r};\mathbf{r'}) =\delta(\mathbf{r}-\mathbf{r'}),
\label{gDH}
\end{equation}
where  $I$ is the ionic strength defined by,
\begin{equation}
I(\mathbf{r})=\frac{1}{2}\beta e^2 \sum_{i=1}^N c_i z_i^2.
\end{equation}
Notice that the pairwise potential between the $i$th fixed test ion and the $j$th free ion is $v_{ij}(\mathbf{r},\mathbf{r}')=z_i z_j G_0(\mathbf{r},\mathbf{r}')$. The self energy of the $i$th ion defined in Eq. \eqref{self_a} is expressed as,
\begin{equation}
U_i(\mathbf{r};\lambda)=\frac{1}{2} z_i^2e^2 \lim_{\mathbf{r}'\to\mathbf{r}} \left[G_\lambda(\mathbf{r},\mathbf{r}')- \frac{1}{4\pi \varepsilon_0 |\mathbf{r}-\mathbf{r}'|}\right]\triangleq \frac{1}{2} z_i^2e^2 u(\mathbf{r};\lambda).
\label{selfenergy}
\end{equation}
Now we have the total free energy as a functional of the ionic concentrations, and the chemical potential can be computed through the functional derivatives. In literature, there is another process called the G\"untelberg charging process \cite{guntelberg1926}, where only one central ion goes through the hypothetical charging, while other ions remain fully charged. Similar derivation can also be applied to the G\"{u}ntelberg charging process, resulting in the excess chemical potential,
\begin{equation}
\mu_i^{\mathrm{ex}}(\mathbf{r}) = U_i(\mathbf{r};\lambda=1).
\label{Gun_mu}
\end{equation}
However, these two charging processes do not necessary give the same result of chemical potential \cite{Bockris1998} and unfortunately, the chemical potential given by the free energy functional \eqref{eng} does not have the simple form as \eqref{Gun_mu}.

\subsection{Asymptotic expansions}
In order to develop a PDE model that can be solved efficiently, we apply the asymptotic method for \eqref{gDH} to obtain
a simplified equation by considering the ionic radius $a$ to be a small parameter in comparison with the inverse Debye length.
We first introduce the PCA which together with the Born energy is the zeroth order asymptotics
of the GDH equation, and then discuss higher order extensions.

\subsubsection{Point charge approximation}

At the limit of $a \to 0$, ions are  point charges and the GDH equation becomes,
\begin{equation}
\displaystyle -\nabla \cdot \varepsilon(\mathbf{r})\nabla G_\lambda^\mathrm{P}(\mathbf{r},\mathbf{r'}) + 2\lambda^2 I(\mathbf{r})  G_\lambda^\mathrm{P}(\mathbf{r},\mathbf{r'}) =\delta(\mathbf{r}-\mathbf{r'}).
\label{MDH_PCA}
\end{equation}
Since the Born energy becomes singular at the zero radius, we shall consider it later on and define the self energy by,
\begin{equation}
U_i^{\mathrm{P}}(\mathbf{r};\lambda)=\frac{1}{2} z_i^2e^2 u^{\mathrm{P}}(\mathbf{r};\lambda)=\frac{z_i^2e^2}{2}\lim_{\mathbf{r}'\to\mathbf{r}} \left[ G_\lambda^\mathrm{P}(\mathbf{r};\mathbf{r'}) -\frac{1}{4\pi \varepsilon(\mathbf{r}) |\mathbf{r}-\mathbf{r}'|}\right].
\label{selfenergy_PCA}
\end{equation}
The Green's function $G_\lambda^{\mathrm{P}}(\mathbf{r},\mathbf{r}')$ can be viewed as the inverse of the modified Helmholtz operator, which is still hard to solve by numerical methods. Since we are only interested in the diagonal components, i.e., the self Green's function $G_\lambda^{\mathrm{P}}(\mathbf{r},\mathbf{r})$, the selected inversion algorithm \cite{LYM+:ATMS:2011} can be employed for the purpose \cite{XuMaggs:JCP:14}. Although the self energy is well-defined, however it ignores the excluded-volume effect and would lead to the instability for strong correlated systems due to the ionic collapse. Furthermore, since it does not include the Born energy of an ion, an appropriate treatment of it should introduce the ionic radius again.

\subsubsection{Higher-order approximations}

The Green's identities are employed to expand the asymptotic expansion into higher orders. Let us
rewrite the GDH equation \eqref{gDH} into an integral form,
\begin{eqnarray}
&&G_\lambda(\mathbf{r},\mathbf{r}')=\int \delta(\mathbf{p}-\mathbf{r}) G_\lambda(\mathbf{p},\mathbf{r}') d\mathbf{p}\nonumber\\
&=&\int \left[-\nabla_\mathbf{p} \cdot \varepsilon(\mathbf{p})  \nabla_\mathbf{p} G_\lambda^\mathrm{P}(\mathbf{p},\mathbf{r}) + 2\lambda^2 I(\mathbf{p})  G_\lambda^\mathrm{P}(\mathbf{p},\mathbf{r})\right] G_\lambda(\mathbf{p},\mathbf{r}')d\mathbf{p},
\label{green1}
\end{eqnarray}
where $\nabla_\mathbf{p}\cdot$ and $\nabla_\mathbf{p}$ are divergence and gradient operators with respect to $\mathbf{p}$. Similarly,
\begin{eqnarray}
&&G_\lambda^\mathrm{P}(\mathbf{r},\mathbf{r}')=\int  \delta(\mathbf{p}-\mathbf{r}') G_\lambda^\mathrm{P}(\mathbf{r},\mathbf{p}) d\mathbf{p}\nonumber\\
&=&\int \left[-\nabla_\mathbf{p} \cdot \epsilon(\mathbf{p},\mathbf{r}')  \nabla_\mathbf{p} G_\lambda(\mathbf{p},\mathbf{r}') + 2\lambda^2 \chi(\mathbf{p}-\mathbf{r}') I(\mathbf{p})  G_\lambda(\mathbf{p},\mathbf{r}')\right] G_\lambda^\mathrm{P}(\mathbf{r},\mathbf{p})d\mathbf{p}.~~
\label{green2}
\end{eqnarray}
We shall use the fact that $\epsilon(\mathbf{p},\mathbf{r}')=\varepsilon(\mathbf{p})$ outside $B(\mathbf{r}')$ and $\epsilon(\mathbf{p},\mathbf{r}')=\varepsilon_0$ inside $B(\mathbf{r}')$, and notice
that $\chi$ is the indicator function of $B^c$.
Subtracting \eqref{green2} from \eqref{green1} and use Green's identities, we have,
\begin{eqnarray}
&&G_\lambda(\mathbf{r},\mathbf{r}')-G_\lambda^\mathrm{P}(\mathbf{r},\mathbf{r}')\nonumber\\
&=&\int_{B^c(\mathbf{r}')}\left[ G_\lambda^\mathrm{P}(\mathbf{p},\mathbf{r})\nabla_\mathbf{p} \cdot \varepsilon(\mathbf{p})  \nabla_\mathbf{p} G_\lambda(\mathbf{p},\mathbf{r}')- G_\lambda(\mathbf{p},\mathbf{r}')\nabla_\mathbf{p} \cdot \varepsilon(\mathbf{p})  \nabla_\mathbf{p} G_\lambda^\mathrm{P}(\mathbf{p},\mathbf{r}) \right]d\mathbf{p}\nonumber\\
&&+\int_{B(\mathbf{r}')}\left[ G_\lambda^\mathrm{P}(\mathbf{p},\mathbf{r})\nabla_\mathbf{p} \cdot \varepsilon_0  \nabla_\mathbf{p} G_\lambda(\mathbf{p},\mathbf{r}')- G_\lambda(\mathbf{p},\mathbf{r}')\nabla_\mathbf{p} \cdot \varepsilon(\mathbf{p})  \nabla_\mathbf{p} G_\lambda^\mathrm{P}(\mathbf{p},\mathbf{r}) \right. \nonumber\\
&& ~~~~~~~~~~~~~~~~~~~~ \left. + 2\lambda^2 I(\mathbf{p})  G_\lambda^\mathrm{P}(\mathbf{r},\mathbf{p}) G_\lambda(\mathbf{p},\mathbf{r}') \right]d\mathbf{p} \\
&=&-\int_{\partial B(\mathbf{r}')} \varepsilon(\mathbf{p})\left[G_\lambda^\mathrm{P}(\mathbf{p},\mathbf{r})\frac{\partial}{\partial \vec{n}}G_\lambda(\mathbf{p},\mathbf{r}')-G_\lambda(\mathbf{p},\mathbf{r}')\frac{\partial}{\partial \vec{n}}G_\lambda^\mathrm{P}(\mathbf{p},\mathbf{r})\right]d\mathbf{p}\nonumber\\
&&+\int_{\partial B(\mathbf{r}')} \varepsilon_0\left[G_\lambda^\mathrm{P}(\mathbf{p},\mathbf{r})\frac{\partial}{\partial \vec{n}}G_\lambda(\mathbf{p},\mathbf{r}')-G_\lambda(\mathbf{p},\mathbf{r}')\frac{\partial}{\partial \vec{n}}G_\lambda^\mathrm{P}(\mathbf{p},\mathbf{r})\right]d\mathbf{p}\nonumber\\
&&+\int_{ B(\mathbf{r}')} G_\lambda(\mathbf{p},\mathbf{r}')\left[\nabla_\mathbf{p} \cdot (\varepsilon_0-\varepsilon(\mathbf{p}))  \nabla_\mathbf{p} G_\lambda^\mathrm{P}(\mathbf{p},\mathbf{r})
+2\lambda^2 I(\mathbf{p})  G_\lambda^\mathrm{P}(\mathbf{r},\mathbf{p})\right] d\mathbf{p},
\end{eqnarray}
where $ \partial /\partial{\vec{n}}$ is the outer normal derivatives. We use the interface conditions on $\partial B(\mathbf{r}')$, namely, functions  $G_\lambda^\mathrm{P}(\mathbf{p},\mathbf{r}')$, $G_\lambda(\mathbf{p},\mathbf{r}')$, $\epsilon(\mathbf{p},\mathbf{r}') \partial_{\vec{n}}G_\lambda(\mathbf{p},\mathbf{r}')$ and $\varepsilon(\mathbf{p})\partial_{\vec{n}} G_\lambda^\mathrm{P}(\mathbf{p},\mathbf{r}')$ are all continuous
across the interface, and we then obtain,
\begin{eqnarray}
&&G_\lambda(\mathbf{r},\mathbf{r}')-G_\lambda^\mathrm{P}(\mathbf{r},\mathbf{r}')\nonumber\\
&=&\left[1-\frac{\varepsilon_0}{\varepsilon(\mathbf{r})}\right] G_\lambda(\mathbf{r},\mathbf{r}')+ \lambda^2\int_{B(\mathbf{r}')}  \varepsilon_0  \kappa^2(\mathbf{p})  G_\lambda^\mathrm{P}(\mathbf{r},\mathbf{p}) G_\lambda(\mathbf{p},\mathbf{r}')d\mathbf{p}\nonumber\\
&&-\int_{B(\mathbf{r}')} G_\lambda(\mathbf{p},\mathbf{r}')\frac{\varepsilon_0}{\varepsilon(\mathbf{p})}\nabla_\mathbf{p} \varepsilon(\mathbf{p}) \cdot \nabla_\mathbf{p} G_\lambda^\mathrm{P}(\mathbf{p},\mathbf{r})d\mathbf{p}\nonumber\\
&&+\int_{\partial B(\mathbf{r}')} \left(\varepsilon_0-\varepsilon(\mathbf{p})\right) G_\lambda(\mathbf{p},\mathbf{r}')\frac{\partial}{\partial \vec{n}}G_\lambda^\mathrm{P}(\mathbf{p},\mathbf{r})d\mathbf{p},
\label{integral}\end{eqnarray}
where $\kappa$ is the inverse of the local Debye-length function, $\kappa(\mathbf{p})=\sqrt{2I(\mathbf{p})/\varepsilon(\mathbf{p})}$. Define the nonsingular parts of $G_\lambda$ and $G$ through,
\begin{equation}
\begin{cases}
\displaystyle G_\lambda^{\mathrm{P}}(\mathbf{r},\mathbf{r}')= \frac{1}{4\pi\sqrt{\varepsilon(\mathbf{r})\varepsilon(\mathbf{r}')}|\mathbf{r}-\mathbf{r}'|} + \Phi_\lambda^\mathrm{P}(\mathbf{r},\mathbf{r}'),\\
\displaystyle G_\lambda(\mathbf{r},\mathbf{r}')= \frac{1}{4\pi\varepsilon_0|\mathbf{r}-\mathbf{r}'|} + \Phi_\lambda(\mathbf{r},\mathbf{r}').
\end{cases}
\label{singular}
\end{equation}
Thus, in whole space of $\Omega$, $\Phi^{\mathrm{P}}_\lambda$ satisfies,
\begin{equation}
\displaystyle -\nabla \cdot \varepsilon(\mathbf{r}) \nabla \Phi_\lambda^{\mathrm{P}}(\mathbf{r},\mathbf{r}') + 2\lambda^2 I(\mathbf{r}) \Phi_\lambda^{\mathrm{P}}(\mathbf{r},\mathbf{r}') =  -\frac{\lambda^2\kappa^2(\mathbf{r})\sqrt{\varepsilon(\mathbf{r})}+\nabla^2 \sqrt{\varepsilon(\mathbf{r})}}{4\pi|\mathbf{r}-\mathbf{r}'|\sqrt{\varepsilon(\mathbf{r}')}}.
\label{non-singular}
\end{equation}
Assume $\varepsilon(\mathbf{r})$ and $\kappa(\mathbf{r})$ are smooth functions, then $\Phi_\lambda^{P}$ and $\nabla \Phi_\lambda^{P}$ are smooth too.
Since we are only interested in the diagonal entries of $G_\lambda$, by substituting \eqref{singular} into \eqref{integral}
and taking the limit $\mathbf{r}'\rightarrow \mathbf{r}$, we obtain,
\begin{equation}
\frac{\varepsilon_0}{\varepsilon(\mathbf{r})}\Phi_\lambda(\mathbf{r},\mathbf{r})-\Phi_\lambda^{\mathrm{P}}(\mathbf{r},\mathbf{r})= \lambda^2 S_1 - S_2 + S_3, \label{PHI}
\end{equation}
where $S_1$, $S_2$ and $S_3$ are the integrals given in \eqref{integral} after taking the limit $\mathbf{r}'\rightarrow \mathbf{r}$, i.e.,
\begin{equation}
\begin{cases}
\displaystyle S_1(\mathbf{r};\lambda)=\int_{B(\mathbf{r})}  \varepsilon_0 \kappa^2(\mathbf{p}) G_\lambda^\mathrm{P}(\mathbf{r},\mathbf{p}) G_\lambda(\mathbf{p},\mathbf{r})  d\mathbf{p}, \\
\displaystyle S_2(\mathbf{r};\lambda)=\int_{B(\mathbf{r})}  G_\lambda(\mathbf{p},\mathbf{r}) \frac{\varepsilon_0}{\varepsilon(\mathbf{p})} \nabla_{\mathbf{p}} \varepsilon(\mathbf{p}) \cdot \nabla_{\mathbf{p}} G_\lambda^\mathrm{P}(\mathbf{p},\mathbf{r})  d\mathbf{p},\\
\displaystyle S_3(\mathbf{r};\lambda)= \int_{\partial B(\mathbf{r})} G_\lambda(\mathbf{p},\mathbf{r}) \left(\varepsilon_0-\varepsilon(\mathbf{p})\right) \frac{\partial}{\partial \vec{n}}
G_\lambda^\mathrm{P}(\mathbf{p},\mathbf{r}) d\mathbf{p}.
\end{cases}
\end{equation}

Now we can obtain a higher-order asymptotic expansion by calculating the integrals within the ball $B$ and the integral on its boundary.
For the leading order, we can use the singular parts of $G_\lambda$ and $G_\lambda^\mathrm{P}$. This yields,
\begin{equation}
\begin{cases}
\displaystyle S_1(\mathbf{r};\lambda) \approx \int_{B(\mathbf{r})} \frac{1}{4\pi\sqrt{\varepsilon(\mathbf{r})\varepsilon(\mathbf{p})}|\mathbf{r}-\mathbf{p}|} \varepsilon_0\kappa^2(\mathbf{p}) \frac{1}{4\pi\varepsilon_0|\mathbf{r}-\mathbf{p}|}  d\mathbf{p},\\
\displaystyle S_2(\mathbf{r};\lambda) \approx \int_{B(\mathbf{r})}  \frac{\nabla_\mathbf{p}  \varepsilon(\mathbf{p})}{4\pi\varepsilon(\mathbf{p})|\mathbf{r}-\mathbf{p}|} \cdot \nabla_\mathbf{p}
\frac{1}{4\pi\sqrt{\varepsilon(\mathbf{r})\varepsilon(\mathbf{p})}|\mathbf{r}-\mathbf{p}|}  d\mathbf{p}, \\
\displaystyle S_3(\mathbf{r};\lambda) \approx -\int_{\partial B(\mathbf{r})} \frac{\varepsilon_0-\varepsilon(\mathbf{p})}{4\pi\varepsilon_0|\mathbf{r}-\mathbf{p}|}  \frac{\partial}{\partial \vec{n}} \frac{1}{4\pi\sqrt{\varepsilon(\mathbf{r})\varepsilon(\mathbf{p})}|\mathbf{r}-\mathbf{p}|} d\mathbf{p}.
\end{cases}
\end{equation}
Clearly, $S_1\sim O(a)$, $S_2 \sim O(a)$, and $S_3$ has the leading asymptotics of order $a^{-1}$,
\begin{equation}
S_3(\mathbf{r};\lambda) \sim -\left(\frac{1}{\varepsilon(\mathbf{r})}-\frac{1}{\varepsilon_0}\right) \frac{1}{4\pi a},
\end{equation}
which dominates the right hand side of Eq. \eqref{PHI}.
Since $u^\mathrm{P}=O(1)$, we can easily obtain a zeroth-order approximation of the self energy,
\begin{equation}
u^{(0)}(\mathbf{r};\lambda) =  \left(\frac{1}{\varepsilon(\mathbf{r})}-\frac{1}{\varepsilon_0}\right) \frac{1}{4\pi a}+u^{\mathrm{P}}(\mathbf{r};\lambda).
\end{equation}
This is the Born energy correction to the PCA.

In order to compute the coefficients of $O(a)$ terms, we need to evaluate the integrals up to one more order. We have,
\begin{equation}
\begin{cases}
\displaystyle S_1(\mathbf{r};\lambda) = \left(\frac{\varepsilon_0}{\varepsilon(\mathbf{r})}+1\right)\frac{\kappa^2(\mathbf{r}) a}{8\pi\varepsilon(\mathbf{r})}+O(a^2),\\
\displaystyle S_2(\mathbf{r};\lambda) = -\left(\frac{\varepsilon_0}{\varepsilon(\mathbf{r})}+1\right)\frac{a}{24\pi}\frac{\nabla^2 \varepsilon(\mathbf{r})}{\varepsilon^2(\mathbf{r})}+ O(a^2),\\
\displaystyle S_3(\mathbf{r};\lambda) = \frac{\varepsilon_0 - \varepsilon}{\varepsilon}\left(\frac{1}{4\pi \varepsilon_0 a} + \Delta u^{(1)} \right)-\frac{\varepsilon_0}{24\pi \varepsilon^3}\nabla^2 \varepsilon a+O(a^2),
\end{cases}
\end{equation}
where
\begin{equation}
\Delta u^{(1)}(\mathbf{r};\lambda)=\frac{a}{4\pi\varepsilon(\mathbf{r})}\left[\lambda^2\kappa^2(\mathbf{r})+\frac{\nabla^2 \varepsilon(\mathbf{r})}{6\varepsilon(\mathbf{r})}\right].
\label{first_add}
\end{equation}
Substituting these terms into Eq. \eqref{PHI}, we arrive at the first-order asymptotics,
\begin{equation}
u^{(1)}(\mathbf{r};\lambda)= u^{\mathrm{P}}(\mathbf{r};\lambda)+ \left[\frac{1}{\varepsilon(\mathbf{r})}-\frac{1}{\varepsilon_0}\right] \frac{1}{4\pi a} +\Delta u^{(1)}(\mathbf{r};\lambda).
\label{firstorder}
\end{equation}
In the expression, the first term is the PCA correlation energy, the second term is the Born energy due to the dielectric variation, and the two terms in $\Delta u^{(1)}$ are corrections
to the PCA due to the finite ion size and to the Born energy, respectively. When the dielectric coefficient is highly inhomogeneous, the $O(a)$ corrections become important.
In particular, the correction to the Born energy is often ignored in the continuum theory and Monte Carlo simulations of charged systems in variable media \cite{FXH:JCP:14,Guan:PRE:16},
and its effect on the ionic structure is less investigated.

The higher-order asymptotics can be obtained by further expanding the three integrals. This will lead to a tedious calculation. We only compute the correction term
from the ionic correlation and ignore the contribution from inhomogeneous dielectric media in the second-order term, which gives an additional correction,
\begin{equation}
\Delta u^{(2)}(\mathbf{r};\lambda) = \lambda^2 \kappa^2(\mathbf{r}) a^2 u^\mathrm{P}(\mathbf{r};\lambda),
\label{secondorder}
\end{equation}
to Eq. \eqref{firstorder}. We then obtain a second-order asymptotics, $u^{(2)}=u^{(1)}+\Delta u^{(2)}$, for systems with slow dielectric variations. It can be seen that the smallness of $a$ is in comparison with the local Debye length, i.e.,  $a\ll 1/\kappa$.

\subsubsection{Accuracy validation of the asymptotics}

To validate the accuracy of these asymptotic approximations, we study two special cases and present numerical calculations for a toy example with more complicated setup.

\paragraph{Case 1} In the original Debye-H\"uckel theory, the self energy is calculated in the bulk region where $\varepsilon(\mathbf{r})=\varepsilon_b$ and the ionic strength $I(\mathbf{r})=\kappa_b^2\varepsilon_b/2$ are both constants. In this case, the solution of \eqref{gDH} can be computed analytically,
\begin{equation}
G_\lambda(\mathbf{r};\mathbf{r}')=
\begin{cases}
\displaystyle \frac{e^{\lambda \kappa_b a}}{1+\lambda\kappa_b a}\frac{e^{-\lambda\kappa_b |\mathbf{r}-\mathbf{r}'|}}{4\pi \epsilon_b |\mathbf{r}-\mathbf{r}'|}, \text{ for } \mathbf{r}\in B^c(\mathbf{r}'),\\
\displaystyle \frac{1}{4\pi\varepsilon_0 |\mathbf{r}-\mathbf{r}'|} + \frac{1}{4\pi\varepsilon_b a(1+\lambda\kappa_b a)}-\frac{1}{4\pi \varepsilon_0 a}, \text{ for } \mathbf{r}\in B(\mathbf{r}').
\end{cases}
\end{equation}
In this case, the PCA Green's function is the screened Coulomb potential and thus we can easily find the PCA self energy $ u^{\mathrm{P}}(\mathbf{r};\lambda) = -\lambda \kappa_b/(4\pi\varepsilon_b).$
The exact self energy and its asymptotic expansion are,
\begin{eqnarray}
u(\mathbf{r};\lambda) &=& \frac{-\lambda \kappa_b }{4\pi\varepsilon_b (1+\lambda\kappa_b a)}+\frac{1}{4\pi a}\left(\frac{1}{\varepsilon_b}-\frac{1}{\varepsilon_0}\right)  \nonumber \\
&\sim& \frac{1}{4\pi a}\left(\frac{1}{\varepsilon_b}-\frac{1}{\varepsilon_0}\right) + u^{\mathrm{P}}(\mathbf{r};\lambda) + \frac{\lambda^2 \kappa_b^2 a}{4\pi \varepsilon_b} + \lambda^2 \kappa_b^2 a^2 u^{\mathrm{P}}(\mathbf{r};\lambda) + \cdots.
\end{eqnarray}
This analytical result is consistent with our asymptotic approximations. The comparison results are shown in Fig. \ref{test}(a).

\paragraph{Case 2} In this case, we fix $\lambda=1$ and consider the case that $\kappa(\mathbf{r})$ and $\varepsilon(\mathbf{r})$ are only functions of radial distance $r$ and satisfy the relation,
\begin{equation}
\kappa^2(r)+\frac{\nabla^2\sqrt{\varepsilon(r)}}{\sqrt{\varepsilon(r)}}=\mu^2
\end{equation}
where $\mu$ is a constant. We can solve the solution of PCA Green's function with the source at the origin $\mathbf{r}'=\mathbf{0}$,
\begin{equation}
G^{\mathrm{P}}(\mathbf{r},\mathbf{0})=\frac{ e^{-\mu r}}{4\pi r\sqrt{\varepsilon(r)\varepsilon(0)}}.
\end{equation}
The PCA self energy is $u^{\mathrm{P}}(\mathbf{0})=-\mu/(4\pi \varepsilon(0)).$
The solution of the GDH equation \eqref{gDH} is,
\begin{equation}
\displaystyle G(\mathbf{r},\mathbf{0})=\begin{cases}
\displaystyle\frac{e^{-\mu(r-a)}}{4\pi r\sqrt{\varepsilon(r)\varepsilon(a)}}\frac{1}{1+\left(\mu+\frac{\nabla \varepsilon(a)}{2\varepsilon(a)}\right)a}, \text{ if } r>a,\\
\displaystyle \frac{1}{4\pi \varepsilon_0 r}+\frac{1}{4\pi \varepsilon(a)a}\frac{1}{1+\left(\mu+\frac{\nabla \varepsilon(a)}{2\varepsilon(a)}\right)a}- \frac{1}{4\pi \varepsilon_0 a},  \text{ if } r<a,
\end{cases}
\end{equation}
which yields the self energy,
\begin{equation}
u(\mathbf{0})=\frac{1}{4\pi \varepsilon(a)a}\frac{1}{1+\left(\mu+\frac{\nabla \varepsilon(a)}{2\varepsilon(a)}\right)a}- \frac{1}{4\pi \varepsilon_0 a}.
\end{equation}
Since $\varepsilon$ is smooth and at least the second-order derivative exists, we can expand it to obtain,
$
\varepsilon(a)=\varepsilon(0)+\frac{1}{6}\nabla^2 \varepsilon a^2+\cdots.
$
We then have,
\begin{equation}
u(\mathbf{0}) \sim \frac{1}{4\pi a}\left(\frac{1}{\varepsilon(0)}-\frac{1}{\varepsilon_0}\right)+u^{\mathrm{P}}(\mathbf{0})+\frac{a}{4\pi \varepsilon} \left( \kappa^2 + \frac{\nabla^2 \varepsilon}{6\varepsilon}\right),
\end{equation}
which is consistent with the first order approximation \eqref{firstorder}.

\paragraph{Example 1} In this example, the dielectric permittivity is piecewise constant: inside the ion, $\varepsilon_0=1.1$, when $x<-1$, $\varepsilon(\mathbf{r})=0.01$, and otherwise $\varepsilon(\mathbf{r})=1$. we choose $\kappa^2_\lambda(\mathbf{r})= e^{-2r^2}+1$ to mimic the ionic distribution near a charged ion immersed in an electrolyte.  We calculate the self energy, $u(\mathbf{0}; 1)$ with different asymptotic approximations, including the 0th, the 1st and the 2nd corrections. The exact data are prepared by numerical solution of the 3D partial differential equation \eqref{gDH} by the finite-difference method \cite{IBR:CPC:1998} and the reaction potential is computed with a mesh size $h=0.03$ with $128^3$ grid points.
In order to check how accurate this mesh size is, we calculate the numerical solution of {\it Case 1} which is in agreement with the exact solution (solid line) from Fig. \ref{test}(a).
The results in Fig. \ref{test}(b) show that the 2nd correction asymptotics with a varying ionic strength and dielectric coefficient match the exact data well, and the zeroth order asymptotic solution significantly deviates from the exact data. It is also shown that the 1st order correction becomes less accurate with the increase of the ionic size, and when $a$ is large the 1st order correction is not enough to provide high accuracy.

\begin{figure}[!htbp]
\begin{center}
\includegraphics[width=0.5\textwidth]{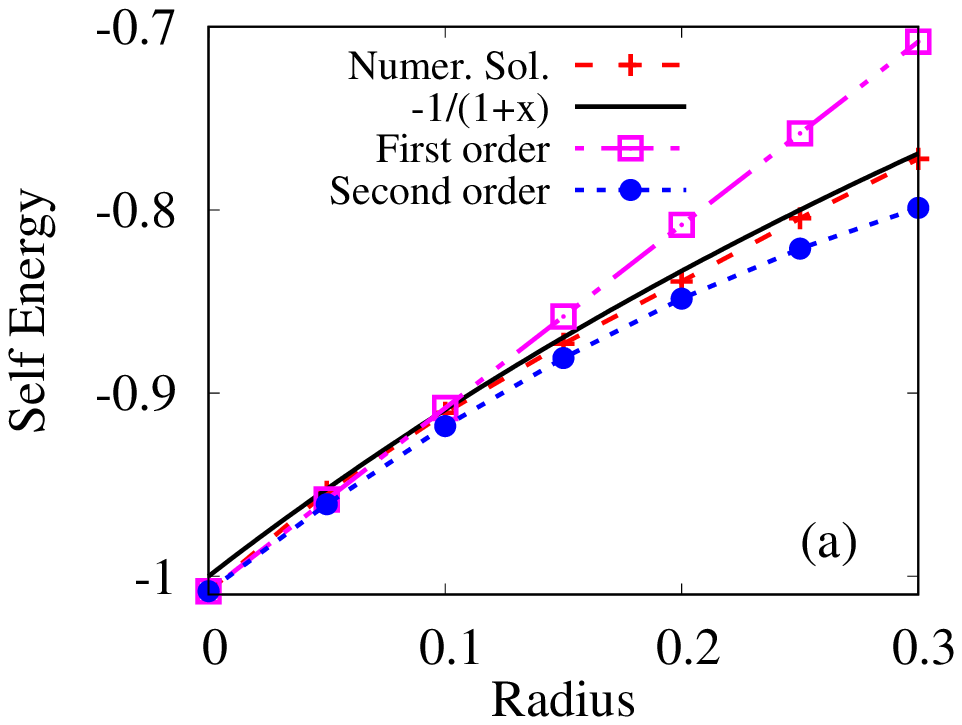}\includegraphics[width=0.5\textwidth]{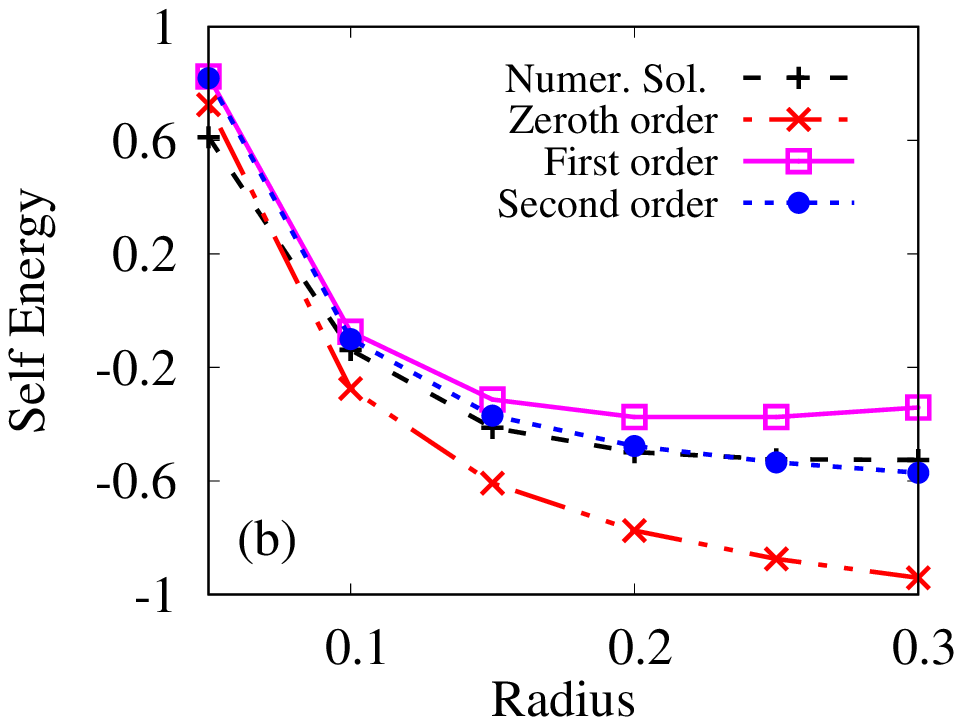}
\caption{Self energy $u(0;1)$ of an ion using different asymptotic approximations compared to the direct numerical solution. (a) Results for {\it Case 1} which has an analytical solution. (b) Results for {\it Example 1} with a varying ionic strength and dielectric coefficient.} \label{test}
\end{center}
\end{figure}

\subsection{Self-energy-modified PNP model}

The self energy defined through the point charge approximation is a functional of ionic concentration. In order to obtain the transport equations for the ions, we need to compute the functional derivative of free energy with respect to ionic concentration. First we will present some useful lemmas.

\begin{lemma} \label{lemmaCond}
The functional derivative of the PCA self energy \eqref{selfenergy_PCA} with respect to ionic concentration $c_j$ is given by,
\begin{equation}
\frac{\delta u^{\mathrm{P}}(\mathbf{r};\lambda)}{\delta c_j(\mathbf{r}')}= -\lambda^2 z_j^2 \beta e^2 G_\lambda^{\mathrm{P}}(\mathbf{r},\mathbf{r}') G_\lambda^{\mathrm{P}}(\mathbf{r}',\mathbf{r}).
\label{UdeltaC}
\end{equation}
\end{lemma}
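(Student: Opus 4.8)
The plan is to differentiate the defining elliptic equation \eqref{MDH_PCA} for the PCA Green's function with respect to a perturbation of $c_j$, invert the linearized equation using $G_\lambda^{\mathrm{P}}$ itself as the Green's function, and then pass to the diagonal limit. First I would observe that, with the dielectric profile $\varepsilon$ held fixed, the only $c_j$-dependence of $u^{\mathrm{P}}(\mathbf{r};\lambda)$ enters through $G_\lambda^{\mathrm{P}}$ via the ionic strength $I(\mathbf{r})=\tfrac{1}{2}\beta e^2\sum_i c_i z_i^2$, whereas the subtracted Coulomb kernel $1/(4\pi\varepsilon(\mathbf{r})|\mathbf{r}-\mathbf{r}'|)$ in \eqref{selfenergy_PCA} does not depend on $c_j$ at all. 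Hence, writing $\delta$ for the first variation induced by $c_j\mapsto c_j+\epsilon\psi$ (so that $\delta I(\mathbf{p})=\tfrac{1}{2}\beta e^2 z_j^2\psi(\mathbf{p})$), one has $\delta u^{\mathrm{P}}(\mathbf{r};\lambda)=\lim_{\mathbf{r}'\to\mathbf{r}}\delta G_\lambda^{\mathrm{P}}(\mathbf{r},\mathbf{r}')$.

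Next I would take the first variation of \eqref{MDH_PCA}. Since the operator $\mathcal{L}_{\mathbf{p}}=-\nabla_{\mathbf{p}}\cdot\varepsilon(\mathbf{p})\nabla_{\mathbf{p}}+2\lambda^2 I(\mathbf{p})$ depends linearly on $I$, differentiation yields the linearized equation $\mathcal{L}_{\mathbf{p}}\,\delta G_\lambda^{\mathrm{P}}(\mathbf{p},\mathbf{r}')=-2\lambda^2\,\delta I(\mathbf{p})\,G_\lambda^{\mathrm{P}}(\mathbf{p},\mathbf{r}')$. Because $G_\lambda^{\mathrm{P}}$ is exactly the Green's function of $\mathcal{L}$, and $\mathcal{L}$ is formally self-adjoint so that $G_\lambda^{\mathrm{P}}(\mathbf{r},\mathbf{p})=G_\lambda^{\mathrm{P}}(\mathbf{p},\mathbf{r})$, this inverts to
\[
\delta G_\lambda^{\mathrm{P}}(\mathbf{r},\mathbf{r}')=-2\lambda^2\int G_\lambda^{\mathrm{P}}(\mathbf{r},\mathbf{p})\,\delta I(\mathbf{p})\,G_\lambda^{\mathrm{P}}(\mathbf{p},\mathbf{r}')\,d\mathbf{p}=-\lambda^2\beta e^2 z_j^2\int G_\lambda^{\mathrm{P}}(\mathbf{r},\mathbf{p})\,G_\lambda^{\mathrm{P}}(\mathbf{p},\mathbf{r}')\,\psi(\mathbf{p})\,d\mathbf{p}.
\]
Sending $\mathbf{r}'\to\mathbf{r}$ and reading off the kernel paired against $\psi$ then gives \eqref{UdeltaC}.

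The step I expect to require the most care is the diagonal limit: one must check that $\delta G_\lambda^{\mathrm{P}}(\mathbf{r},\mathbf{r}')$ remains bounded and continuous as $\mathbf{r}'\to\mathbf{r}$ — so that the variation produces no new singular term and the subtraction in \eqref{selfenergy_PCA} is left intact — and that the limit may be moved inside the $\mathbf{p}$-integral. This follows from the near-diagonal behaviour $G_\lambda^{\mathrm{P}}(\mathbf{x},\mathbf{y})\sim 1/(4\pi\sqrt{\varepsilon(\mathbf{x})\varepsilon(\mathbf{y})}|\mathbf{x}-\mathbf{y}|)$ recorded in \eqref{singular}: the product $G_\lambda^{\mathrm{P}}(\mathbf{r},\mathbf{p})G_\lambda^{\mathrm{P}}(\mathbf{p},\mathbf{r}')$ has at worst an $|\mathbf{p}-\mathbf{r}|^{-1}|\mathbf{p}-\mathbf{r}'|^{-1}$ singularity, which is integrable in $\mathbb{R}^3$ and depends continuously on $(\mathbf{r},\mathbf{r}')$ by dominated convergence, so the convolution extends continuously to the diagonal. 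An equivalent route is to split $G_\lambda^{\mathrm{P}}$ into its Coulomb part plus the smooth remainder $\Phi_\lambda^{\mathrm{P}}$ of \eqref{non-singular} and estimate the Coulomb--Coulomb, Coulomb--$\Phi$, and $\Phi$--$\Phi$ pieces separately. Finally, it is worth remarking that the symmetry of the right-hand side of \eqref{UdeltaC} under $\mathbf{r}\leftrightarrow\mathbf{r}'$ is a direct consequence of the self-adjointness of $\mathcal{L}$, and it is precisely this symmetry that will be needed for the self-energy-modified PNP system to possess an energy dissipation law.
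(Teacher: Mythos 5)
Your proposal is correct and takes essentially the same route as the paper: perturb $c_j$, linearize the defining equation for $G_\lambda^{\mathrm{P}}$ through the ionic strength $I$, invert the linearized equation using $G_\lambda^{\mathrm{P}}$ itself via Green's identity, and pass to the diagonal limit. Your extra care about the integrability of $G_\lambda^{\mathrm{P}}(\mathbf{r},\mathbf{p})G_\lambda^{\mathrm{P}}(\mathbf{p},\mathbf{r}')$ near the diagonal is a detail the paper leaves implicit but does not change the argument.
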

\begin{proof}
Suppose that $\widetilde{c}_i(\mathbf{r})=c_i(\mathbf{r}) + \delta c_i(\mathbf{r})$ is the perturbed concentration, where $\delta c_i$ is small, the corresponding ionic strength is denoted as $\displaystyle \widetilde{I}(\mathbf{r})=\frac{\beta e^2}{2}\sum_i \widetilde{c}_i z_i^2 $, and the PCA self energy after perturbation is described by,
\begin{equation}
\begin{cases}
\displaystyle -\nabla \cdot \varepsilon(\mathbf{r})  \nabla \widetilde{G}_\lambda^{\mathrm{P}}(\mathbf{r},\mathbf{r}') + 2\lambda^2 \widetilde{I}(\mathbf{r})  \widetilde{G}_\lambda^{\mathrm{P}}(\mathbf{r},\mathbf{r}') =\delta(\mathbf{r}-\mathbf{r'})\\
\displaystyle \widetilde{u}(\mathbf{r};\lambda)= \lim_{\mathbf{r}' \to \mathbf{r}}\left(\widetilde{G}_\lambda^{\mathrm{P}}(\mathbf{r},\mathbf{r'})-\frac{1}{\varepsilon(\mathbf{r})}G_0(\mathbf{r},\mathbf{r'})\right).
\end{cases}
\label{VariationU}
\end{equation}
Dropping the higher order terms, the difference for the PCA Green's function equation between \eqref{MDH_PCA} and \eqref{VariationU} is,
\begin{equation}
\begin{cases}
\displaystyle -\nabla \cdot \varepsilon(\mathbf{r})  \nabla (\widetilde{G}_\lambda^{\mathrm{P}} -G_\lambda^{\mathrm{P}}) + 2\lambda^2  I(\mathbf{r})  (\widetilde{G}_\lambda^{\mathrm{P}} -G_\lambda^{\mathrm{P}}) =2\lambda^2 (I-\widetilde{I})  \widetilde{G}_\lambda^{\mathrm{P}},\\
\displaystyle \widetilde{u}^{\mathrm{P}}(\mathbf{r};\lambda)-u^{\mathrm{P}}(\mathbf{r};\lambda)= \lim_{\mathbf{r}' \to \mathbf{r}}\left(\widetilde{G}_\lambda^{\mathrm{P}}(\mathbf{r},\mathbf{r'})-G_\lambda^{\mathrm{P}}(\mathbf{r},\mathbf{r'})\right).
\end{cases}
\end{equation}
The solution can be then expressed by the Green's identity,
\begin{equation}
\displaystyle  \widetilde{G}_\lambda^{\mathrm{P}}(\mathbf{r};\mathbf{r'}) -G_\lambda^{\mathrm{P}}(\mathbf{r};\mathbf{r'}) = \int_\Omega 2\lambda^2   (I(\mathbf{p})-\widetilde{I}(\mathbf{p})) G_\lambda^{\mathrm{P}}(\mathbf{p},\mathbf{r'}) \widetilde{G}_\lambda^{\mathrm{P}}(\mathbf{r},\mathbf{p}) d\mathbf{p}.
\end{equation}
Thus the functional derivative leads to Eq. \eqref{UdeltaC}.
\end{proof}

From this lemma, we can easily see the PCA self energy satisfies the following reciprocal relation,
\begin{equation}
\frac{z_i^2 e^2}{2}\frac{\delta u_i^{\mathrm{P}}(\mathbf{r};\lambda)}{\delta c_j(\mathbf{r}')}=\frac{z_j^2 e^2}{2}\frac{\delta u_j^{\mathrm{P}}(\mathbf{r}';\lambda)}{\delta c_i(\mathbf{r})}.
\end{equation}

\begin{lemma}\label{lemmaLambda}
The derivative of the PCA self energy \eqref{selfenergy_PCA} with respect to the coupling parameter $\lambda$ is given by,
\begin{equation}
\frac{\partial u^{\mathrm{P}}(\mathbf{r};\lambda)}{\partial \lambda}= -\int_\Omega 4\lambda  I(\mathbf{p}) G_\lambda^{\mathrm{P}}(\mathbf{p},\mathbf{r}) G_\lambda^{\mathrm{P}}(\mathbf{r},\mathbf{p}) d\mathbf{p}.
\label{ulambda}
\end{equation}
\end{lemma}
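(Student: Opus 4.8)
The plan is to mimic the strategy used in the proof of Lemma \ref{lemmaCond}, but now perturbing the coupling parameter $\lambda$ rather than a concentration $c_j$. First I would compare the PCA Green's function equation \eqref{MDH_PCA} at coupling $\lambda+\delta\lambda$ with the one at $\lambda$ and subtract. Writing $w:=G_{\lambda+\delta\lambda}^{\mathrm{P}}-G_\lambda^{\mathrm{P}}$ and discarding terms of order $\delta\lambda^2$, one finds that $w$ solves
\begin{equation}
-\nabla\cdot\varepsilon(\mathbf{r})\nabla w(\mathbf{r},\mathbf{r}')+2\lambda^2 I(\mathbf{r})\,w(\mathbf{r},\mathbf{r}')=-4\lambda\,\delta\lambda\, I(\mathbf{r})\,G_\lambda^{\mathrm{P}}(\mathbf{r},\mathbf{r}')+O(\delta\lambda^2),
\end{equation}
because the source $\delta(\mathbf{r}-\mathbf{r}')$ is $\lambda$-independent and the only $\lambda$-dependence in the operator is the term $2\lambda^2 I$, whose variation produces $4\lambda\,\delta\lambda\, I\,G_\lambda^{\mathrm{P}}$.

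Next, I would invoke the Green's identity exactly as in Lemma \ref{lemmaCond}, using $G_\lambda^{\mathrm{P}}$ itself as the Green's function of the symmetric operator $-\nabla\cdot\varepsilon\nabla+2\lambda^2 I$, to obtain the integral representation
\begin{equation}
w(\mathbf{r},\mathbf{r}')=-4\lambda\,\delta\lambda\int_\Omega G_\lambda^{\mathrm{P}}(\mathbf{r},\mathbf{p})\,I(\mathbf{p})\,G_\lambda^{\mathrm{P}}(\mathbf{p},\mathbf{r}')\,d\mathbf{p}+O(\delta\lambda^2).
\end{equation}
Dividing by $\delta\lambda$ and letting $\delta\lambda\to0$ gives $\partial_\lambda G_\lambda^{\mathrm{P}}(\mathbf{r},\mathbf{r}')=-4\lambda\int_\Omega G_\lambda^{\mathrm{P}}(\mathbf{r},\mathbf{p})\,I(\mathbf{p})\,G_\lambda^{\mathrm{P}}(\mathbf{p},\mathbf{r}')\,d\mathbf{p}$. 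Finally, since the subtracted singular part $1/(4\pi\varepsilon(\mathbf{r})|\mathbf{r}-\mathbf{r}'|)$ in \eqref{selfenergy_PCA} does not depend on $\lambda$, differentiating the definition of $u^{\mathrm{P}}$ in $\lambda$ and exchanging $\partial_\lambda$ with the limit $\mathbf{r}'\to\mathbf{r}$ reproduces \eqref{ulambda} (using the symmetry $G_\lambda^{\mathrm{P}}(\mathbf{r},\mathbf{p})=G_\lambda^{\mathrm{P}}(\mathbf{p},\mathbf{r})$).

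The step that needs genuine care is the legitimacy of interchanging the limit $\mathbf{r}'\to\mathbf{r}$ with the $\lambda$-differentiation, i.e.\ showing that $\int_\Omega G_\lambda^{\mathrm{P}}(\mathbf{r},\mathbf{p})\,I(\mathbf{p})\,G_\lambda^{\mathrm{P}}(\mathbf{p},\mathbf{r}')\,d\mathbf{p}$ stays finite and continuous as $\mathbf{r}'\to\mathbf{r}$. This holds in three dimensions because near its singularity $G_\lambda^{\mathrm{P}}(\mathbf{r},\mathbf{p})$ behaves like $1/(4\pi\varepsilon(\mathbf{r})|\mathbf{r}-\mathbf{p}|)$, so the integrand is $O(|\mathbf{r}-\mathbf{p}|^{-2})$ near $\mathbf{p}=\mathbf{r}$, which is locally integrable in $\mathbb{R}^3$; hence $w$ carries no $|\mathbf{r}-\mathbf{r}'|^{-1}$ singular part and the diagonal limit is harmless. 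Equivalently, and perhaps cleaner for the write-up, one can carry out the whole argument for the nonsingular part $\Phi_\lambda^{\mathrm{P}}$ of \eqref{singular}, which is smooth by \eqref{non-singular}, and use $\partial_\lambda u^{\mathrm{P}}(\mathbf{r};\lambda)=\partial_\lambda\Phi_\lambda^{\mathrm{P}}(\mathbf{r},\mathbf{r})$.
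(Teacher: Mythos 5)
Your proposal is correct and follows essentially the same route as the paper's proof: perturb $\lambda$, subtract the two PCA Green's function equations to first order, represent the difference via the Green's identity with $G_\lambda^{\mathrm{P}}$ as the kernel, and pass to the diagonal limit. The only addition is your explicit justification that the diagonal limit is harmless (local integrability of $|\mathbf{r}-\mathbf{p}|^{-2}$ in three dimensions), which the paper leaves implicit.
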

\begin{proof}
Consider the system with a small perturbation in $\lambda$, $\widehat{\lambda}=\lambda + \delta \lambda$. In the new system, the generalized Debye-H\"uckel equation and the self energy become,
\begin{equation}
\begin{cases}
\displaystyle -\nabla \cdot \varepsilon(\mathbf{r})  \nabla \widehat{G_\lambda}^{\mathrm{P}}(\mathbf{r},\mathbf{r'}) + 2\widehat{\lambda}^2  I(\mathbf{r})  \widehat{G_\lambda}^{\mathrm{P}}(\mathbf{r},\mathbf{r'}) =\delta(\mathbf{r}-\mathbf{r'}),\\
\displaystyle \widehat{u}^{\mathrm{P}}(\mathbf{r};\lambda)= \lim_{\mathbf{r}' \to \mathbf{r}}\left(\widehat{G_\lambda}^{\mathrm{P}}(\mathbf{r},\mathbf{r'})-\frac{1}{\varepsilon(\mathbf{r})} G_0(\mathbf{r},\mathbf{r'})\right).
\end{cases}
\label{ChargingU2}
\end{equation}
Then the difference between Eqs. \eqref{ChargingU2} and \eqref{MDH_PCA} leads us to,
\begin{equation}
\begin{cases}
\displaystyle -\nabla \cdot \varepsilon(\mathbf{r})  \nabla (\widehat{G_\lambda}^{\mathrm{P}} -G_\lambda^{\mathrm{P}}) + 2\lambda^2  I(\mathbf{r})  (\widehat{G_\lambda}^{\mathrm{P}} -G_\lambda^{\mathrm{P}})
=2(\lambda^2-\widehat{\lambda}^2) I(\mathbf{r})   \widehat{G_\lambda}^{\mathrm{P}},\\
\displaystyle \widehat{u}^{\mathrm{P}}(\mathbf{r};\lambda)-u^{\mathrm{P}}(\mathbf{r};\lambda)= \lim_{\mathbf{r}' \to \mathbf{r}}\left(\widehat{G_\lambda}^{\mathrm{P}}(\mathbf{r},\mathbf{r'})-G_\lambda^{\mathrm{P}}(\mathbf{r},\mathbf{r'})\right).
\end{cases}
\end{equation}
We then have,
\begin{equation}
\displaystyle  \widehat{G_\lambda}^{\mathrm{P}}(\mathbf{r},\mathbf{r'}) -G_\lambda^{\mathrm{P}}(\mathbf{r},\mathbf{r'}) = \int_\Omega 2(\lambda^2-\widehat{\lambda}^2)   I(\mathbf{p}) G_\lambda^{\mathrm{P}}(\mathbf{p},\mathbf{r'}) \widehat{G_\lambda}^{\mathrm{P}}(\mathbf{r},\mathbf{p}) d\mathbf{p}.
\end{equation}
Taking the derivative with respect to $\lambda$, we get the result of expression \eqref{ulambda}.
\end{proof}

\begin{lemma} \label{lemmaPCA}
The Debye charging process of the PCA self energy gives the correlation free energy whose corresponding chemical potential is $z_i^2 e^2 u(\mathbf{r};1)/2$ for the $i$th species.
\end{lemma}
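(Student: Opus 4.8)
The plan is to compute the functional derivative of the correlation part of the free energy \eqref{eng} head-on, and to use Lemmas \ref{lemmaCond} and \ref{lemmaLambda} to collapse the charging integral by an integration by parts in $\lambda$. First I would isolate the correlation free energy under the point charge approximation: by \eqref{selfenergy_PCA}, $U_i^{\mathrm{P}}(\mathbf{r};\lambda)=\tfrac12 z_i^2 e^2 u^{\mathrm{P}}(\mathbf{r};\lambda)$, so (discarding the $\lambda$-independent, singular Born piece, which is treated separately) the charging term of \eqref{eng} reads
\[
F^{\mathrm{corr}}[c_1,\ldots,c_N]=\int_\Omega d\mathbf{r}\int_0^1 d\lambda\,\lambda\sum_{i=1}^N c_i(\mathbf{r})\,z_i^2 e^2\,u^{\mathrm{P}}(\mathbf{r};\lambda).
\]
Differentiating with respect to $c_j(\mathbf{r}')$ splits into an \emph{explicit} term $\int_0^1 d\lambda\,\lambda\, z_j^2 e^2\,u^{\mathrm{P}}(\mathbf{r}';\lambda)$ coming from the prefactor $c_j$, plus an \emph{implicit} term $\int_\Omega d\mathbf{r}\int_0^1 d\lambda\,\lambda\sum_i c_i(\mathbf{r})z_i^2 e^2\,\delta u^{\mathrm{P}}(\mathbf{r};\lambda)/\delta c_j(\mathbf{r}')$ coming from the dependence of the self energy on the ionic strength.

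Next I would feed Lemma \ref{lemmaCond} into the implicit term and use the identity $\sum_i c_i z_i^2 e^2 = 2I(\mathbf{r})/\beta$ to rewrite it as $-2 z_j^2 e^2\int_0^1 d\lambda\,\lambda^3\int_\Omega d\mathbf{r}\,I(\mathbf{r})\,G_\lambda^{\mathrm{P}}(\mathbf{r},\mathbf{r}')G_\lambda^{\mathrm{P}}(\mathbf{r}',\mathbf{r})$. Since the operator in \eqref{MDH_PCA} is self-adjoint one has $G_\lambda^{\mathrm{P}}(\mathbf{r},\mathbf{r}')=G_\lambda^{\mathrm{P}}(\mathbf{r}',\mathbf{r})$, and Lemma \ref{lemmaLambda} then identifies the inner spatial integral with $-\tfrac{1}{4\lambda}\,\partial_\lambda u^{\mathrm{P}}(\mathbf{r}';\lambda)$. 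Hence the implicit term becomes $\tfrac12 z_j^2 e^2\int_0^1 d\lambda\,\lambda^2\,\partial_\lambda u^{\mathrm{P}}(\mathbf{r}';\lambda)$.

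Finally I would integrate by parts in $\lambda$, using $\int_0^1\lambda^2\,\partial_\lambda u^{\mathrm{P}}\,d\lambda = u^{\mathrm{P}}(\mathbf{r}';1) - 2\int_0^1\lambda\,u^{\mathrm{P}}(\mathbf{r}';\lambda)\,d\lambda$, where the endpoint at $\lambda=0$ drops out because of the $\lambda^2$ factor (note $u^{\mathrm{P}}(\mathbf{r}';0)$ is finite). The surviving term $-z_j^2 e^2\int_0^1\lambda\,u^{\mathrm{P}}(\mathbf{r}';\lambda)\,d\lambda$ cancels the explicit term exactly, leaving $\delta F^{\mathrm{corr}}/\delta c_j(\mathbf{r}')= \tfrac12 z_j^2 e^2\,u^{\mathrm{P}}(\mathbf{r}';1)$, which is the asserted chemical potential $z_i^2 e^2 u(\mathbf{r};1)/2$ (with $u\equiv u^{\mathrm{P}}$ in this PCA setting).

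The arithmetic — keeping track of the powers of $\lambda$ and of the constants $\beta$, $e^2$, $z_j^2$, and invoking the reciprocity of $G_\lambda^{\mathrm{P}}$ so that the two lemmas dovetail — is the bulk of the work but is routine once those lemmas are in hand. The only points that genuinely demand care are the interchange of the $\mathbf{r}$- and $\lambda$-integrations with the functional derivative, and the integration by parts together with the vanishing of the $\lambda=0$ endpoint; the latter is licensed by the smooth dependence of $G_\lambda^{\mathrm{P}}$ on $\lambda^2$ in \eqref{MDH_PCA}, which makes $\lambda\mapsto u^{\mathrm{P}}(\mathbf{r}';\lambda)$ continuously differentiable on $[0,1]$.
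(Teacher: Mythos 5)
Your proposal is correct and follows essentially the same route as the paper: the same split of $\delta \mathcal{F}_{\mathrm{cor}}^{\mathrm{P}}/\delta c_j$ into an explicit and an implicit term, the same use of Lemma \ref{lemmaCond} together with $2I=\beta e^2\sum_i c_i z_i^2$ and Lemma \ref{lemmaLambda} to turn the implicit term into $\tfrac12 z_j^2e^2\int_0^1\lambda^2\,\partial_\lambda u^{\mathrm{P}}\,d\lambda$, and the same integration by parts in $\lambda$ that cancels the explicit term. Your added remarks on the symmetry of $G_\lambda^{\mathrm{P}}$ and the vanishing $\lambda=0$ endpoint are sound and merely make explicit what the paper leaves implicit.
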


\begin{proof}
The correlation energy,
\begin{equation}
\displaystyle \mathcal{F}_{\mathrm{cor}}^{\mathrm{P}}= \int_\Omega \sum_i c_i(\mathbf{p}) z_i^2 e^2 \int_0^1 \lambda u^{\mathrm{P}}(\mathbf{p};\lambda) d\lambda d\mathbf{p}.
\label{PCA_F}
\end{equation}
The corresponding excess chemical potential is given by the functional derivative,
\begin{equation}
\displaystyle \frac{\delta }{\delta c_j(\mathbf{r})}\mathcal{F}_{\mathrm{cor}}^{\mathrm{P}}=\int_0^1 \lambda z_j^2 e^2 u^{\mathrm{P}}(\mathbf{r};\lambda) d\lambda + \int_\Omega \sum_i c_i(\mathbf{p}) z_i^2 e^2\int_0^1 \lambda \frac{\delta u^{\mathrm{P}}(\mathbf{p};\lambda)}{\delta c_j(\mathbf{r})} d\lambda d\mathbf{p}.
\label{Varfl}
\end{equation}
The second term in Eq. \eqref{Varfl} is,
\begin{eqnarray}
& &\int_\Omega \sum_i c_i(\mathbf{p}) z_i^2 e^2\int_0^1 \lambda \frac{\delta u^{\mathrm{P}}(\mathbf{p};\lambda)}{\delta c_j(\mathbf{r})} d\lambda d\mathbf{p}  \nonumber\\
& = &- \int_\Omega \sum_i c_i(\mathbf{p}) z_i^2 e^2\int_0^1  \beta e^2 z_j^2 \lambda^3  G_\lambda^{\mathrm{P}}(\mathbf{p},\mathbf{r})G_\lambda^{\mathrm{P}}(\mathbf{r},\mathbf{p}) d\lambda d\mathbf{p} \nonumber \\
& = & \int_0^1 \frac{1}{2} z_j^2e^2 \lambda^2 \frac{d u^{\mathrm{P}}(\mathbf{r};\lambda)}{d \lambda} d\lambda \nonumber\\
& = & \frac{1}{2} z_j^2 e^2 u^{\mathrm{P}}(\mathbf{r};1) -\int_0^1 \lambda z_j^2  e^2 u^{\mathrm{P}}(\mathbf{r};\lambda) d\lambda,
\end{eqnarray}
where the first two equalities use Lemmas \ref{lemmaCond} and \ref{lemmaLambda}, respectively, and the last equality is simply the result of the integration by parts.
The chemical potential can be then written as,
\begin{equation}
\displaystyle \mu_{j,\mathrm{cor}}^{\mathrm{P}}= \frac{\delta }{\delta c_j(\mathbf{r})}\mathcal{F}_{\mathrm{cor}}^{\mathrm{P}}=\frac{1}{2} z_j^2e^2  u^{\mathrm{P}}(\mathbf{r};1),
\label{PCA_mu}
\end{equation}
which ends the proof of this Lemma.
\end{proof}

Lemma \ref{lemmaPCA} implies that the chemical potential is exactly the self energy at the full charging $\lambda=1$, and thus the Debye charging process is consistent with G\"untelberg charging process in the PCA situation. With this lemma, we can easily obtain the simplified expression for the correlation energy in the higher order approximations.

\begin{lemma}\label{lemmaHO}
Let \begin{equation}
\displaystyle \mathcal{F}_{\mathrm{cor}}^{(l)}= \int_\Omega \sum_i c_i(\mathbf{p}) z_i^2 e^2 \int_0^1 \lambda u^{(l)}(\mathbf{p};\lambda) d\lambda d\mathbf{p}
\label{First_F}
\end{equation}
be the correlation component of the free energy with $l=0$, 1 or 2 to represent the order of asymptotics. We have:

(1) The zeroth-order and first-order excess chemical potentials are,
 \begin{equation}
 \mu_{j,\mathrm{cor}}^{(l)}=  \delta \mathcal{F}_{\mathrm{cor}}^{(l)}/\delta c_j=   z_j^2e^2  u^{(l)}(\mathbf{r};1)/2,~~\hbox{for} ~l=0, 1. \label{order01}
 \end{equation}

(2) The second-order asymptotics does not have the same conclusion, but it holds the reciprocal relation,
\begin{equation}
\frac{z_i^2 e^2}{2}\frac{\delta u_i^{(2)}(\mathbf{r};\lambda)}{\delta c_j(\mathbf{r}')}=\frac{z_j^2 e^2}{2}\frac{\delta u_j^{(2)}(\mathbf{r}';\lambda)}{\delta c_i(\mathbf{r})},
\label{Second_rep}
\end{equation}
which implies there exists a free energy functional whose variation is the self energy
at the full charging $ \delta \mathcal{F^*}/\delta c_j = z_j^2e^2 u^{(2)}/2$. We will redefine $\mathcal{F}_{\mathrm{cor}}^{(2)}=\mathcal{F}^*$.
\end{lemma}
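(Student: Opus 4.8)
\emph{Proof proposal.} The plan is to reduce the statement to the point--charge case already settled in Lemma~\ref{lemmaPCA}, and to handle the finite--size and Born corrections one summand at a time. For part (1) I would use the decompositions $u^{(0)}=u^{\mathrm{P}}+B$ and $u^{(1)}=u^{\mathrm{P}}+B+\Delta u^{(1)}$, where $B(\mathbf{r})=(1/\varepsilon(\mathbf{r})-1/\varepsilon_0)/(4\pi a)$ is the Born term and $\Delta u^{(1)}$ is \eqref{first_add}. Writing $\kappa^2(\mathbf{r})=\beta e^2\sum_k c_k(\mathbf{r})z_k^2/\varepsilon(\mathbf{r})$, each summand is of exactly one of three types: (i) independent of $\{c_k\}$ and of $\lambda$ (namely $B$ and the $\nabla^2\varepsilon$ part of $\Delta u^{(1)}$); (ii) of the form $\lambda^2 A(\mathbf{r})\sum_k c_k(\mathbf{r})z_k^2$ with $A$ independent of $\{c_k\}$ (the $\kappa^2$ part of $\Delta u^{(1)}$); or (iii) the point--charge self energy $u^{\mathrm{P}}$ itself. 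Since $\mathcal{F}^{(l)}_{\mathrm{cor}}$ in \eqref{First_F} is linear in $u^{(l)}$, it suffices to verify that $\delta\mathcal{F}/\delta c_j$ equals $\tfrac12 z_j^2e^2$ times the summand evaluated at $\lambda=1$, for each type.

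Type (iii) is Lemma~\ref{lemmaPCA} verbatim. For type (i) the contribution to $\mathcal{F}^{(l)}_{\mathrm{cor}}$ is $\tfrac12\int_\Omega\sum_i c_i z_i^2 e^2\,w\,d\mathbf{r}$, with variation $\tfrac12 z_j^2 e^2 w(\mathbf{r})=\tfrac12 z_j^2 e^2 w(\mathbf{r};1)$ because $w$ is $\lambda$--independent. For type (ii) the contribution is $\tfrac14 e^2\int_\Omega A(\mathbf{r})\bigl(\sum_k c_k z_k^2\bigr)^2 d\mathbf{r}$, with variation $\tfrac12 z_j^2 e^2 A(\mathbf{r})\sum_k c_k z_k^2$; here the charging weight contributes $\int_0^1\lambda\cdot\lambda^2\,d\lambda=1/4$, which is exactly compensated by the factor $2$ from differentiating the square, and this is the arithmetic behind \eqref{order01}. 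Summing the three contributions gives \eqref{order01} for $l=0,1$. An equivalent, more uniform bookkeeping: using $\tfrac12 u^{(l)}(\mathbf{r};1)-\int_0^1\lambda u^{(l)}(\mathbf{r};\lambda)\,d\lambda=\tfrac12\int_0^1\lambda^2\partial_\lambda u^{(l)}(\mathbf{r};\lambda)\,d\lambda$, \eqref{order01} is equivalent to the cross--term identity
\[
\int_\Omega\sum_i c_i(\mathbf{p})z_i^2 e^2\int_0^1\lambda\,\frac{\delta u^{(l)}(\mathbf{p};\lambda)}{\delta c_j(\mathbf{r})}\,d\lambda\,d\mathbf{p}=\frac{z_j^2 e^2}{2}\int_0^1\lambda^2\,\frac{\partial u^{(l)}(\mathbf{r};\lambda)}{\partial\lambda}\,d\lambda,
\]
which for $u^{\mathrm{P}}$ follows from Lemmas~\ref{lemmaCond} and \ref{lemmaLambda} (both sides collapse to $-2z_j^2 e^2\int_0^1\lambda^3\!\int_\Omega I(\mathbf{p})G_\lambda^{\mathrm{P}}(\mathbf{p},\mathbf{r})G_\lambda^{\mathrm{P}}(\mathbf{r},\mathbf{p})\,d\mathbf{p}\,d\lambda$), and is immediate for types (i) and (ii).

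For part (2) this same bookkeeping both explains why the conclusion changes and sets up the reciprocity. The new term $\Delta u^{(2)}(\mathbf{r};\lambda)=\lambda^2 a^2\kappa^2(\mathbf{r})u^{\mathrm{P}}(\mathbf{r};\lambda)$ of \eqref{secondorder} is neither type (i) nor (ii): it is the product of the local linear factor $\kappa^2(\mathbf{r})$ and the nonlocal functional $u^{\mathrm{P}}(\mathbf{r};\lambda)$. Running the cross--term identity for $\Delta u^{(2)}$ through Lemmas~\ref{lemmaCond}--\ref{lemmaLambda}, its left side yields $\int_\Omega I(\mathbf{p})\kappa^2(\mathbf{p})G_\lambda^{\mathrm{P}}(\mathbf{p},\mathbf{r})G_\lambda^{\mathrm{P}}(\mathbf{r},\mathbf{p})\,d\mathbf{p}$ whereas its right side yields $\kappa^2(\mathbf{r})\int_\Omega I(\mathbf{p})G_\lambda^{\mathrm{P}}(\mathbf{p},\mathbf{r})G_\lambda^{\mathrm{P}}(\mathbf{r},\mathbf{p})\,d\mathbf{p}$; these disagree whenever $\kappa$ is not constant, so $\delta\mathcal{F}^{(2)}_{\mathrm{cor}}/\delta c_j\neq z_j^2 e^2 u^{(2)}(\mathbf{r};1)/2$ in general. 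To obtain \eqref{Second_rep}, I would compute $\delta u^{(2)}/\delta c_j$ term by term. The $u^{\mathrm{P}}$ and $\Delta u^{(1)}$ pieces, multiplied by $z_i^2 e^2/2$, are symmetric in $(i,\mathbf{r})\leftrightarrow(j,\mathbf{r}')$: for $u^{\mathrm{P}}$ this is the reciprocity noted after Lemma~\ref{lemmaCond} together with $G_\lambda^{\mathrm{P}}(\mathbf{r},\mathbf{r}')=G_\lambda^{\mathrm{P}}(\mathbf{r}',\mathbf{r})$, and for the $\Delta u^{(1)}$ piece the kernel is diagonal, $\propto z_i^2 z_j^2\,\delta(\mathbf{r}-\mathbf{r}')/\varepsilon^2(\mathbf{r})$. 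The product rule applied to $\Delta u^{(2)}$ gives a diagonal piece $\propto z_i^2 z_j^2\,\delta(\mathbf{r}-\mathbf{r}')\,u^{\mathrm{P}}(\mathbf{r};\lambda)/\varepsilon(\mathbf{r})$, again symmetric, plus a piece $\propto z_i^2 z_j^2\,\kappa^2(\mathbf{r})\,G_\lambda^{\mathrm{P}}(\mathbf{r},\mathbf{r}')G_\lambda^{\mathrm{P}}(\mathbf{r}',\mathbf{r})$ coming from $\delta u^{\mathrm{P}}/\delta c_j$; this last is symmetric only up to $\bigl(\kappa^2(\mathbf{r})-\kappa^2(\mathbf{r}')\bigr)G_\lambda^{\mathrm{P}}(\mathbf{r},\mathbf{r}')G_\lambda^{\mathrm{P}}(\mathbf{r}',\mathbf{r})$, which is beyond the order to which $\Delta u^{(2)}$ was derived (the slowly--varying background regime) and is removed at no cost by the symmetrization $\kappa^2(\mathbf{r})\mapsto\tfrac12[\kappa^2(\mathbf{r})+\kappa^2(\mathbf{r}')]$ in that term. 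Granting \eqref{Second_rep} at $\lambda=1$, the existence of $\mathcal{F}^*$ with $\delta\mathcal{F}^*/\delta c_j=z_j^2 e^2 u^{(2)}(\mathbf{r};1)/2$ is the functional analogue of the Poincar\'e lemma: setting $\mu_j^*(\mathbf{r})=z_j^2 e^2 u^{(2)}(\mathbf{r};1)/2$, the functional $\mathcal{F}^*[\{c_k\}]:=\int_0^1\sum_j\int_\Omega c_j(\mathbf{r})\,\mu_j^*[\{s c_k\}](\mathbf{r})\,d\mathbf{r}\,ds$ is a primitive --- differentiate under the integral sign, use the symmetry of $\delta\mu_j^*/\delta c_i$ from \eqref{Second_rep}, and integrate by parts in $s$ --- and we set $\mathcal{F}^{(2)}_{\mathrm{cor}}:=\mathcal{F}^*$.

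The main obstacle is exactly the reciprocity of the $\Delta u^{(2)}$ contribution: the argument hinges on being allowed to treat the kernel $\kappa^2(\mathbf{r})\,G_\lambda^{\mathrm{P}}(\mathbf{r},\mathbf{r}')G_\lambda^{\mathrm{P}}(\mathbf{r}',\mathbf{r})$ as symmetric, which is precisely the point at which the slow--variation hypothesis used to derive $\Delta u^{(2)}$ enters (or, equivalently, one adopts the symmetrized form of $\Delta u^{(2)}$). Everything else --- the term--by--term variations, the integration by parts in $\lambda$, and the Poincar\'e--type construction of $\mathcal{F}^*$ --- is routine bookkeeping once that point is granted.
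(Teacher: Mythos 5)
Your proof is correct, and for part (1) it is essentially the paper's argument: the paper likewise splits $u^{(l)}$ into the PCA piece (handled by Lemma~\ref{lemmaPCA}), the concentration-independent Born and $\nabla^2\varepsilon$ pieces (where $\int_0^1\lambda\,d\lambda=1/2$ does the work), and the $\kappa^2$ piece (where $\int_0^1\lambda^3 d\lambda=1/4$ is compensated by the factor $2$ from differentiating the square of $\sum_k c_k z_k^2$). For part (2) you diverge from the paper in two ways worth recording. First, on the reciprocity of the $\Delta u^{(2)}$ term: the paper's own computation writes the prefactor simply as $\kappa^2$ and passes from $\lambda^2\kappa^2 a^2\,\tfrac12 z_i^2e^2\,\delta u^{\mathrm{P}}(\mathbf{r};\lambda)/\delta c_j(\mathbf{r}')$ to $\tfrac12 z_j^2e^2\,\delta\Delta u^{(2)}(\mathbf{r}';\lambda)/\delta c_i(\mathbf{r})$ without comment, which tacitly identifies $\kappa^2(\mathbf{r})$ with $\kappa^2(\mathbf{r}')$ on the support of $G_\lambda^{\mathrm{P}}(\mathbf{r},\mathbf{r}')G_\lambda^{\mathrm{P}}(\mathbf{r}',\mathbf{r})$; you make this mismatch explicit and repair it by symmetrization, justified by the slow-variation regime in which $\Delta u^{(2)}$ was derived --- a more careful treatment of the same step. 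You also give a concrete reason why \eqref{order01} fails at second order (the cross-term identity produces $\int I(\mathbf{p})\kappa^2(\mathbf{p})G G\,d\mathbf{p}$ versus $\kappa^2(\mathbf{r})\int I(\mathbf{p})G G\,d\mathbf{p}$), where the paper only asserts the inequality. Second, your construction of $\mathcal{F}^*$ is the homotopy (Poincar\'e-lemma) primitive $\int_0^1\sum_j\int_\Omega c_j\,\mu_j^*[s\mathbf{c}]\,d\mathbf{r}\,ds$, whereas the paper builds $\mathcal{F}^*$ as a functional Taylor series around a reference distribution $\mathbf{c}^0$ and separately argues convergence of that series via an $L^2$ bound. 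Both constructions rest on the same symmetry of the second variation; yours sidesteps the convergence discussion entirely (at the mild cost of requiring $u^{(2)}$ to be defined along the ray $s\mathbf{c}$, $s\in[0,1]$, which it is), while the paper's makes the quadratic structure in $c-c^0$ explicit. No gaps.
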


\begin{proof}
(1)  Since the Born energy is independent of the ionic concentration, we have,
\begin{equation}
\displaystyle \frac{\delta }{\delta c_j(\mathbf{r})} \int_\Omega \int_0^1 \lambda \sum_i c_i(\mathbf{p}) z_i^2 e^2 \left[\frac{1}{\varepsilon(\mathbf{r})}-\frac{1}{\varepsilon_0}\right] \frac{1}{4\pi a} d\lambda d\mathbf{p}=\left[\frac{1}{\varepsilon(\mathbf{r})}-\frac{1}{\varepsilon_0}\right] \frac{z_j^2 e^2}{8\pi a}.
\end{equation}
Noticing that $\displaystyle \kappa^2 = \frac{\beta e^2}{\varepsilon}\sum c_i z_i^2$, we can obtain,
\begin{equation}
\displaystyle \frac{\delta }{\delta c_j(\mathbf{r})} \int_\Omega \int_0^1 \lambda a \frac{\sum_i c_i(\mathbf{p}) z_i^2 e^2 }{4\pi\varepsilon(\mathbf{r})}\left[\lambda^2\kappa^2(\mathbf{p})+\frac{\nabla^2 \varepsilon(\mathbf{p})}{6\varepsilon(\mathbf{p})}\right] d\lambda d\mathbf{p}=\frac{z_j^2 e^2}{2}\Delta u^{(1)}(\mathbf{r};1).
\end{equation}
Using Lemma \ref{lemmaPCA} for the PCA term, we find that
the excess chemical potential equals the self energy of an ion up to the first-order asymptotic expansion. We complete the proof of Eq. \eqref{order01}.

(2) For the second-order approximation, following the same procedure as \eqref{Varfl}, we can easily see that the increment in chemical potential
does not equal the increment in the self energy at full charing,
\begin{equation}
\Delta \mu^{(2)}(\mathbf{r}) \neq \frac{z_j^2e^2}{2}\Delta u^{(2)}(\mathbf{r};1).
\end{equation}
The PCA energy satisfies the reciprocal relation. From the proof of Lemma (2.1), we can straightforwardly obtain that,
\begin{equation}
\frac{1}{2} z_i^2e^2 \frac{\delta \Delta u^{(1)}(\mathbf{r};\lambda)}{\delta c_j(\mathbf{r}')}=\frac{\beta e^4 z_i^2 z_j^2 }{8\pi\varepsilon^2(\mathbf{r})}\lambda^2 a \delta(\mathbf{r}-\mathbf{r'})=\frac{1}{2} z_j^2e^2\frac{\delta \Delta u^{(1)}(\mathbf{r}';\lambda)}{\delta c_i(\mathbf{r})}.
\end{equation}
And now for the second-order correction,
\begin{eqnarray}
&&\frac{1}{2} z_i^2e^2 \frac{\delta \Delta u^{(2)}(\mathbf{r};\lambda)}{\delta c_j(\mathbf{r}')}\nonumber\\
&=&\frac{\beta e^4 z_i^2 z_j^2}{2\varepsilon}\lambda^2 a^2 u^P(\mathbf{r};\lambda)\delta(\mathbf{r}-\mathbf{r'})+\lambda^2 \kappa^2 a^2\frac{1}{2} z_i^2e^2 \frac{\delta u^{P}(\mathbf{r};\lambda)}{\delta c_j(\mathbf{r}')} \nonumber\\
&=&\frac{\beta e^4 z_i^2 z_j^2}{2\varepsilon}\lambda^2 a^2 u^P(\mathbf{r};\lambda)\delta(\mathbf{r}-\mathbf{r'})+\lambda^2 \kappa^2 a^2\frac{1}{2} z_j^2e^2 \frac{\delta u^{P}(\mathbf{r};\lambda)}{\delta c_i(\mathbf{r}')}\nonumber\\
&=& \frac{1}{2} z_j^2e^2\frac{\delta \Delta u^{(2)}(\mathbf{r}';\lambda)}{\delta c_i(\mathbf{r})}.
\end{eqnarray}
This yields the relation \eqref{Second_rep}.

We define a functional through the series expansion around some reference density distribution $\mathbf{c}^0=\{c_i^0(\mathbf{r}), i=1,\cdots, N\}$ as follows,
\begin{eqnarray}
&&\mathcal{F}^*[\mathbf{c}] = \sum_i \int_\Omega  [c_i(\mathbf{r})-c_i^0(\mathbf{r})]U_i^{(2*)}(\mathbf{r};1) d\mathbf{r}  \nonumber \\ &&~~~~~~~+\frac{1}{2}\sum_{i,j} \iint_\Omega [c_i(\mathbf{r})-c_i^0(\mathbf{r})][c_j(\mathbf{r'})-c_j^0(\mathbf{r'})] \frac{\delta U_i^{(2*)}(\mathbf{r};1) }{\delta c_j^0(\mathbf{r})}d\mathbf{r}d\mathbf{r'} + \cdots,
\label{Fstar2}
\end{eqnarray}
where
\begin{equation}
\displaystyle U_i^{(2*)}=\frac{z_i^2 e^2}{2}\left(u^{\mathrm{P*}}+ \left[\frac{1}{\varepsilon(\mathbf{r})}-\frac{1}{\varepsilon_0}\right] \frac{1}{4\pi a} +\Delta u^{(1*)}+\Delta u^{(2*)}\right)
\end{equation}
is the second-order approximated self energy of the $i$th ion in the system with given distributions $\mathbf{c}^0$, and $u^{\mathrm{P*}}$, $\Delta u^{(1*)}$, $\Delta u^{(2*)}$ are correlation energies by
the PCA, the first- and second-order asymptotics (given by Eqs. \eqref{selfenergy_PCA}, \eqref{first_add}, and \eqref{secondorder},  respectively.) with the $\mathbf{c}^0$ as the ion distributions.
By the expression of  $\delta u^{P*}(\mathbf{r};1) / \delta c_j^0(\mathbf{r})$ in Lemma \ref{lemmaCond}, we can estimate the upper bound of each integral in Eq. \eqref{Fstar2},
\begin{eqnarray}
&&\left|\iint [c_{i_0}(\mathbf{r_0})-c_{i_0}^0(\mathbf{r_0})]\cdots [c_{i_n}(\mathbf{r_n})-c_{i_n}^0(\mathbf{r_n})] \frac{\delta^{(n)} U_{i_0}^{(2*)}(\mathbf{r_0};1) }{\delta c_{i_1}^0(\mathbf{r_1})\cdots \delta c_{i_n}^0(\mathbf{r_n})}d\mathbf{r_1}\cdots d\mathbf{r_n}\right|\nonumber \\
&& \leq M \prod_k || z_{i_k}(c_{i_k}-c_{i_k}^0)||_{L^2(\Omega)},
\end{eqnarray}
with some constant $M$. Thus,the series in \eqref{Fstar2} converges and  is well defined.

By the reciprocal relation,
\begin{equation}
\displaystyle \frac{\delta U_i^{(2*)}(\mathbf{r};1) }{\delta c_j^0(\mathbf{r})}=\frac{\delta U_j^{(2*)}(\mathbf{r};1) }{\delta c_i^0(\mathbf{r})},
\end{equation}
the corresponding excess chemical potential can be written as,
\begin{eqnarray}
\mu_i^*(\mathbf{r})&\triangleq&\frac{\delta \mathcal{F}^*}{\delta c_i(\mathbf{r})}
=U_i^{(2*)}(\mathbf{r};1)+\sum_j \int [c_j(\mathbf{r'})-c_j^0(\mathbf{r'})] \frac{\delta U_i^{(2*)}(\mathbf{r};1) }{\delta c_j^0(\mathbf{r})}d\mathbf{r'}+\cdots\nonumber\\
&=&U_i^{(2)}(\mathbf{r};1).
\end{eqnarray}
The proof is then completed.
\end{proof}

From Lemma \ref{lemmaHO}, we can see when the ion size is introduced, the Debye charging process and G\"untelberg charging process are consistent to the first-order asymptotic approximation,
while they are not consistent for the second-order approximation.

Now we obtained the total free energy with approximate correlation energy,
\begin{equation}
F=\int_\Omega \left[ \frac{\varepsilon}{2} |\nabla \phi |^2+\sum_{i=1}^N k_B T c_i (\log c_i-1)  \right] d\mathbf{r} + \mathcal{F}_{\mathrm{cor}}^{(l)},
\end{equation}
with $l=0, 1$ and 2, and approximate total chemical potential $\mu_i^{(l)}=z_ie\phi+k_BT \log c_i + \mu_{j,\mathrm{cor}}^{(l)}.$ Here the case of $l=2$ refers to the the functional $\mathcal{F}^*$.
We choose the entropy production,
\begin{equation}
\Delta =\sum_i   \int_\Omega  \frac{D_i}{c_i}  |J_i|^2 d\mathbf{r},
\end{equation}
where $J_i$ is the flux of $i$th ion. The variational approach\cite{Xu:2014:CMS} gives the flux $J_i=D_i c_i\nabla \mu_i$, which is equivalent to the Fick's law.  By coupling the Poisson equation for the electric potential, the GDH equation for the PCA self energy with the flux equations, we get the modified PNP equations
for the charged system, expressed as,
\begin{equation} \label{pnp}
\begin{cases}
\displaystyle \frac{\partial}{\partial t} c_i(\mathbf{r}) = \nabla \cdot D_i \left[ \nabla c_i + \beta \left( z_i e c_i \nabla \phi +  \frac{1}{2} z_i^2e^2 c_i  \nabla u^{(l)}\right) \right],\\
\displaystyle -\nabla \cdot \varepsilon(\mathbf{r}) \nabla \phi = \sum_i z_i e c_i,
\end{cases}
\end{equation}
where $u^{(l)}$ is given by,
\begin{equation}
\begin{cases}
\displaystyle u^{(0)}(\mathbf{r}; 1)= u^{\mathrm{P}}(\mathbf{r};1)+ \left[\frac{1}{\varepsilon(\mathbf{r})}-\frac{1}{\varepsilon_0}\right] \frac{1}{4\pi a}, \\
\displaystyle u^{(1)}(\mathbf{r}; 1)= u^{(0)}(\mathbf{r};1) + \frac{a}{4\pi\varepsilon(\mathbf{r})}\left[ \kappa^2(\mathbf{r})+\frac{\nabla^2 \varepsilon(\mathbf{r})}{6\varepsilon(\mathbf{r})}\right], \\
\displaystyle u^{(2)}(\mathbf{r}; 1)= u^{(1)}(\mathbf{r};1) + \kappa^2(\mathbf{r}) a^2 u^\mathrm{P}(\mathbf{r}; 1),
\end{cases}
\end{equation}
and the PCA self energy is determined by,
\begin{equation}
\begin{cases}
-\nabla \cdot \varepsilon(\mathbf{r})\nabla G^\mathrm{P}(\mathbf{r},\mathbf{r'}) + 2 I(\mathbf{r})  G^\mathrm{P}(\mathbf{r},\mathbf{r'}) =\delta(\mathbf{r}-\mathbf{r'}),\\
\displaystyle u^{\mathrm{P}}(\mathbf{r};1)=\lim_{\mathbf{r}'\to\mathbf{r}} \left[ G^\mathrm{P}(\mathbf{r};\mathbf{r'}) -\frac{1}{\varepsilon(\mathbf{r})}G_0(\mathbf{r};\mathbf{r'})\right].
\end{cases} \label{PCA_energy}
\end{equation}
We have the following theorem for the modified PNP system.

\begin{theorem} The set of modified PNP equations \eqref{pnp} for $l=0, 1$ or 2 describes a dissipation system with a proper energy dissipation law.
\end{theorem}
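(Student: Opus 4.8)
The plan is to show that the total free energy
$F=\int_\Omega\big[\tfrac{\varepsilon}{2}|\nabla\phi|^2+\sum_i k_BT c_i(\log c_i-1)\big]d\mathbf{r}+\mathcal{F}_{\mathrm{cor}}^{(l)}$
is a Lyapunov functional for the flow \eqref{pnp}, following the energetic-variational template: (i) identify the total chemical potential $\mu_i^{(l)}=z_ie\phi+k_BT\log c_i+\mu_{i,\mathrm{cor}}^{(l)}$ with the (total) variational derivative $\delta F/\delta c_i$, in which $\phi$ is treated as slaved to $\{c_j\}$ through Poisson's equation; (ii) differentiate $F$ along a solution and rewrite $\partial_t c_i$ in divergence form; (iii) integrate by parts and read off $dF/dt=-\Delta\le 0$.

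For step (i) I would treat the three pieces of $F$ separately. The ideal-gas entropy contributes $k_BT\log c_i$ at once. For the mean-field electrostatic term one differentiates $-\nabla\cdot\varepsilon\nabla\phi=\sum_j z_je c_j$ with respect to $c_i$ to get $\delta\phi(\mathbf{r}')/\delta c_i(\mathbf{r})=z_ie\,\mathcal{G}(\mathbf{r}',\mathbf{r})$ with $\mathcal{G}$ the Green's function of $-\nabla\cdot\varepsilon\nabla$; then, for boundary conditions making the Poisson operator self-adjoint (Dirichlet, or Neumann/periodic), an integration by parts with vanishing boundary term yields $\delta\big(\int_\Omega\tfrac{\varepsilon}{2}|\nabla\phi|^2\big)/\delta c_i=z_ie\phi$. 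The correlation term is exactly Lemma~\ref{lemmaHO}: for $l=0,1$ one has $\delta\mathcal{F}_{\mathrm{cor}}^{(l)}/\delta c_j=z_j^2e^2u^{(l)}(\mathbf{r};1)/2$ by \eqref{order01}, while for $l=2$ one must use the redefined functional $\mathcal{F}_{\mathrm{cor}}^{(2)}:=\mathcal{F}^{*}$ of \eqref{Fstar2}, whose variation equals $z_j^2e^2u^{(2)}(\mathbf{r};1)/2$; its well-posedness and this identity rest on the reciprocal relation \eqref{Second_rep} and the $L^2$-bound on the kernel $\delta u^{\mathrm P}/\delta c_j$ from Lemma~\ref{lemmaCond}, both already established. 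Summing, $\delta F/\delta c_i=\mu_i^{(l)}$.

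For steps (ii)--(iii) I would write the first line of \eqref{pnp} in continuity form $\partial_t c_i=-\nabla\cdot J_i$ with $J_i=-D_i\beta c_i\nabla\mu_i^{(l)}$ (the constitutive law produced by the energetic-variational derivation from the dissipation $\Delta$, equivalent to Fick's law), the equivalence being checked by expanding $\nabla\mu_i^{(l)}=z_ie\nabla\phi+k_BT\nabla c_i/c_i+\tfrac12 z_i^2e^2\nabla u^{(l)}$. Then
\begin{align*}
\frac{dF}{dt}&=\sum_i\int_\Omega\frac{\delta F}{\delta c_i}\,\partial_t c_i\,d\mathbf{r}=\sum_i\int_\Omega\mu_i^{(l)}\,\partial_t c_i\,d\mathbf{r}\\
&=\sum_i\int_\Omega\nabla\mu_i^{(l)}\cdot J_i\,d\mathbf{r}=-\sum_i\int_\Omega D_i\beta c_i\,\big|\nabla\mu_i^{(l)}\big|^2\,d\mathbf{r}\le 0,
\end{align*}
where the third equality uses $\partial_t c_i=-\nabla\cdot J_i$ and integrates by parts, the boundary term vanishing by the no-flux condition $J_i\cdot\vec{n}=0$ on $\Gamma$ together with the vanishing electrostatic boundary term from the implicit $\phi$-dependence of $F$. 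Rewriting the last integrand through $J_i$ puts the right-hand side in the entropy-production form $-\Delta$ of the text; it is manifestly nonpositive when $c_i>0$ and vanishes iff $\nabla\mu_i^{(l)}\equiv 0$ for every $i$, i.e. at equilibrium. The same computation, with $\mu_i^{(l)}$ used as the test function, delivers the associated weak formulation.

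I expect the main obstacle to be careful bookkeeping rather than a single hard estimate. The three points to nail down are: (a) making rigorous the time-dependent chain rule $\tfrac{d}{dt}\int_\Omega\tfrac{\varepsilon}{2}|\nabla\phi|^2=\sum_i\int_\Omega z_ie\phi\,\partial_t c_i$, the dynamical counterpart of the functional-derivative identity in step (i), which again is just symmetry of the Poisson bilinear form and again hinges on the electrostatic boundary contribution; (b) for $l=2$, keeping straight that the dissipation law holds for the \emph{redefined} functional $\mathcal{F}^{*}$ and not for the naive charging integral $\mathcal{F}_{\mathrm{cor}}^{(2)}$ of \eqref{First_F} --- this is precisely where the Debye and G\"untelberg charging processes part ways; and (c) tracking positivity and integrability of the $c_i$ so that $\sum_i\int_\Omega(D_i\beta c_i)^{-1}|J_i|^2$ is finite. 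None of these is deep given the lemmas already in hand, but all are needed for the dissipation law to be \emph{proper}.
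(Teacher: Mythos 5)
Your proposal is correct and follows essentially the same route as the paper: differentiate $F$ along the flow, identify $\delta F/\delta c_i$ with $\mu_i^{(l)}$ using Lemma~\ref{lemmaHO} (with the redefined $\mathcal{F}^*$ for $l=2$), and integrate by parts to obtain $dF/dt=-\sum_i\int_\Omega D_i c_i|\nabla\mu_i^{(l)}|^2\,d\mathbf{r}$ plus boundary contributions. The only difference is cosmetic: the paper retains the two boundary integrals and interprets them as electrostatic and concentration-flux energy input through $\Gamma$, whereas you impose no-flux and self-adjoint boundary conditions so that they vanish outright.
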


\begin{proof} Let $\Gamma=\partial \Omega$ be the boundary of the electrolyte. By Lemma \ref{lemmaHO},
the time evolution of the energy is,
\begin{eqnarray}
&\displaystyle \frac{d F}{dt}& = \frac{d}{dt}\int_\Omega \left[\frac{\varepsilon|\nabla\phi|^2}{2}+k_BT \sum_i c_i(\log c_i-1) \right]d\mathbf{r} + \frac{d}{dt} \mathcal{F}_{\mathrm{cor}}^{(l)} \nonumber \\
&&= \int_\Omega \left[\varepsilon \nabla\phi\cdot\nabla\phi_t+ \sum_i \left(k_BT\log c_i \frac{\partial}{\partial t}c_i + \frac{1}{2}z_i^2e^2 u^{(l)}  \frac{\partial}{\partial t}c_i \right) \right]d\mathbf{r}.
\end{eqnarray}
Integration by parts and using the Poisson equation gives,
\begin{eqnarray}
&\displaystyle \frac{d F}{dt}&= \int_\Omega \left[ -\phi \nabla \cdot \varepsilon \nabla \phi_t +\sum_i \left(k_BT\log c_i + \frac 12 z_i^2e^2 u^{(l)} \right)\frac{\partial c_i}{\partial t} \right]d\mathbf{r}+ \int_\Gamma \left(\varepsilon \frac{\partial \phi}{\partial \vec{n}}\right)_t \phi d\mathbf{p} \nonumber \\
&&= \int_\Omega \left[ \sum_i \left(z_i e\phi+ k_BT\log c_i + \frac 12 z_i^2e^2 u^{(l)} \right)\frac{\partial c_i}{\partial t}  \right]d\mathbf{r}+ \int_\Gamma \left(\varepsilon \frac{\partial \phi}{\partial \vec{n}}\right)_t \phi d\mathbf{s} \nonumber\\
&&= \int_\Omega \left[ \sum_i \mu_i^{(l)} \nabla \cdot D_i c_i \nabla \mu_i^{(l)} \right]d\mathbf{r}+ \int_\Gamma \left(\varepsilon \frac{\partial \phi}{\partial \vec{n}}\right)_t \phi d\mathbf{s} \nonumber \\
&&= -\int_\Omega  \sum_i  D_i c_i |\nabla \mu_i^{(l)}|^2 d\mathbf{r}+ \int_\Gamma \left(\varepsilon \frac{\partial \phi}{\partial \vec{n}}\right)_t \phi d\mathbf{s} + \int_\Gamma \sum_i \mu_i D_i c_i \frac{\partial \mu_i^{(l)}}{\partial \vec{n}}  d\mathbf{s}. \label{dissipation}
\end{eqnarray}
The third equality uses the definition of the chemical potential and the Nernst-Planck equations, and the fourth equality uses the integration by parts again.
We can see that the first term in Eq. \eqref{dissipation} is the energy dissipation in domain $\Omega$. The second term stands for the electrostatic energy
input from the boundary by noting that $\left(\varepsilon \partial \phi/\partial \vec{n}\right)_t$ is the current flow to the surface.
In the third term of Eq. \eqref{dissipation}, $D_i c_i \partial \mu_i^{(l)}/\partial \vec{n}$ is the concentration flux through $\Gamma$ and
this term presents the energy transfer through the boundary. Without energy input from the boundary, the system exactly satisfies the
energy dissipation principle.
\end{proof}

\section{Numerical solution for the modified PNP equations}

We show the numerical calculation for the modified PNP equations in this section, in particular, the effect of correlation term.
we denote $U_i(\mathbf{c})=z_i^2e^2 u^{(l)}/2$ and $W_i=z_ie\phi+U_i$ and we introduce the Slotboom variables \cite{Slotboom1969}
$g_i=c_i e^{W_i}.$
We consider a geometry with homogeneity in $y-z$ plane, and the interval for $x$-coordinate is between $[-L, L]$ with Dirichlet boundary conditions for the boundary potential
$\phi(\pm L,t)=V_\pm$ and non-flux boundary conditions $\partial_x g_i=0$ at $x=\pm L$. We assume $D_i=D$ and $\varepsilon$ are
constant.

The generalized Deby-H\"uckel equation for the PCA self-energy is solved through select inversion algorithm \cite{LYM+:ATMS:2011} which could compute the diagonal entries efficiently without computing the whole inverse matrix. This procedure is described clearly in literature \cite{XuMaggs:JCP:14}.
We focus on the discretization of the modified PNP equations and present a numerical scheme which will be proven to remain the mass
conservation of the system due to the non-flux boundary conditions. This semi-implicit scheme is not well studied in the PNP literature,
and we then describe it in some details. Since $c_i=g_i e^{-W_i}$, the PNP equations can be rewritten as,
\begin{eqnarray}
&&\varepsilon\phi_{xx}=- \sum_i z_ie c_i,\quad x\in[a,b],\label{poission} \\
&&\partial_t c_i =D\nabla\cdot\left(e^{-W_i}\nabla g_i\right),\quad x\in[a,b],t>0,
\end{eqnarray}
where $W_i$ and $g_i$ are functions of $c_i$ and $\phi$.

Let $h$ and $k$ be the space and time mesh sizes. We partition $[-L,L]$ with the interior grid points $x^j=-L+h(j-\frac{1}{2}), j=1,...,J$ and
boundary points $x^{\frac{1}{2}}=-L$ and $x^{J+\frac{1}{2}}=+L$. Thus the Poisson equation (\ref{poission}) can be discretized as
\begin{equation}\label{poisson}
\varepsilon(\phi^{j+1}-2\phi^j+\phi^{j-1})  =- h^2 \sum_i z_i e c_i^j,
\end{equation}
with boundary conditions $\frac{1}{2}(\phi^1+\phi^0)=V_-$ and $\frac{1}{2}(\phi^{J+1}+\phi^J)=V_+.$
The discretization is of the second order of accuracy in space. The finite-difference discretization for the evolution of the Nernst-Planck equation
is written as,
\begin{equation}\label{schemetimespace}
\frac{c^{j,n+1}_i-c^{j,n}_i}{k}=\frac{D}{h}\left(e^{-W^{j+1/2,n+1/2}_i}\widehat{g_x}_i^{j+1/2}-e^{-W^{j-1/2,n+1/2}_i}\widehat{g_x}_i^{j-1/2}\right),
\end{equation}
where
\begin{eqnarray}
&&\widehat{g_x}_i^{j+1/2}=\frac{(c^{j+1,n+1}_i+c^{j+1,n}_i)e^{W^{j+1,n+1/2}_i}-(c^{j,n+1}_i+c^{j,n}_i)e^{W^{j,n+1/2}_i}}{2h},\\
&&W^{j,n+1/2}_i=\frac{3W^{j,n}_i-W^{j,n-1}_i}{2},\label{schemetimespacebc}
\end{eqnarray}
with boundary conditions $\widehat{g_x}_i^{1/2}=\widehat{g_x}_i^{N+1/2}=0.$ This is a second-order scheme, and since the use of the
extrapolation for the nonlinear term the iteration for numerical algebra is avoided. It can be shown that the scheme (\ref{schemetimespace})
satisfies the mass conservation, i.e.,
\begin{eqnarray}
\sum_j c_i^{j,n+1}h=\sum_j c_i^{j,n}h.
\end{eqnarray}
This can be easily validated by summing Eq. \eqref{schemetimespace} over index $j$ and using boundary conditions \eqref{schemetimespacebc}.
The scheme is stable under the condition $k=O(h)$ due to the implicit treatment of the second-order differentiation.

\begin{figure*}[htbp]
\begin{center}
\includegraphics[width=0.495\textwidth]{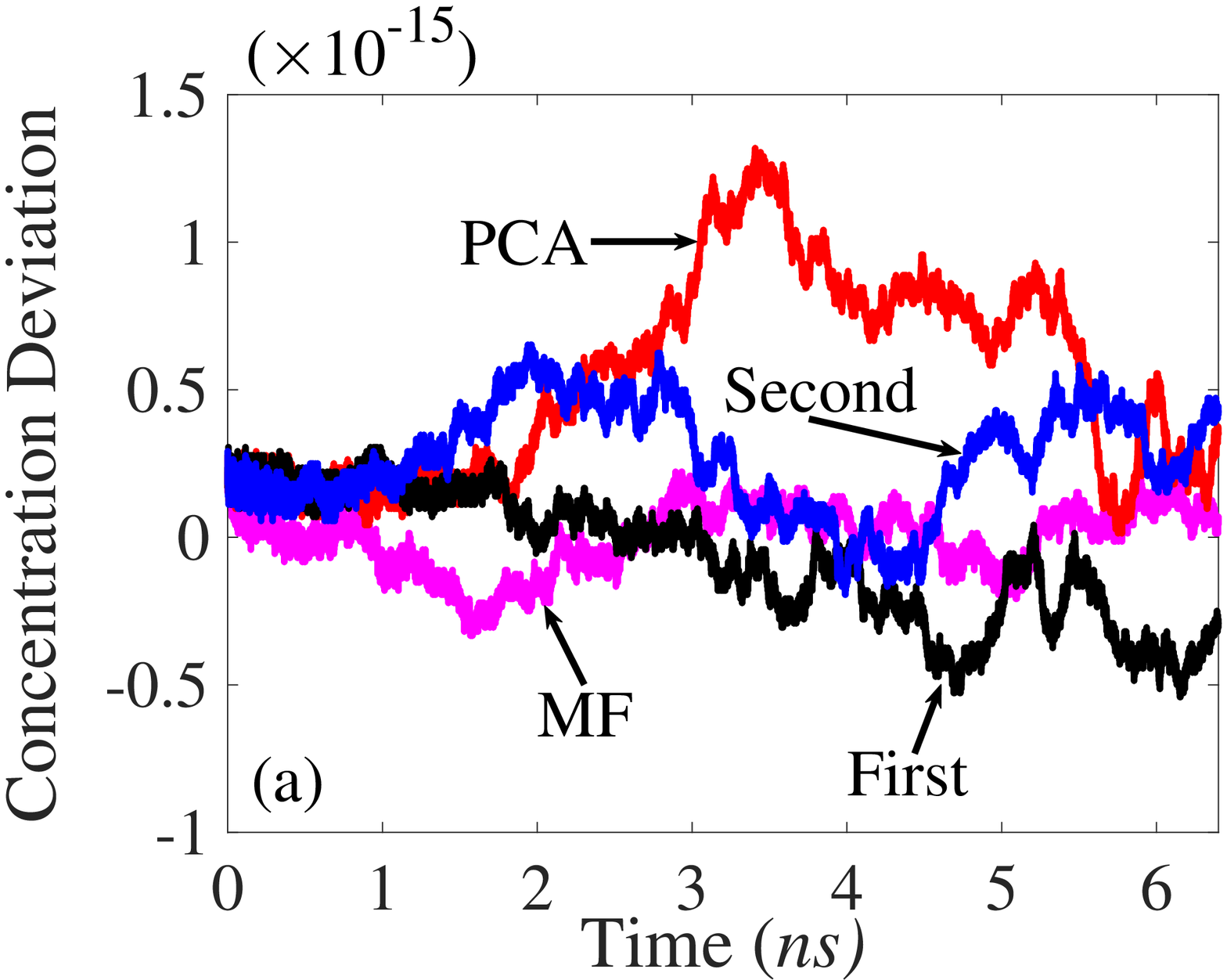}
\includegraphics[width=0.495\textwidth]{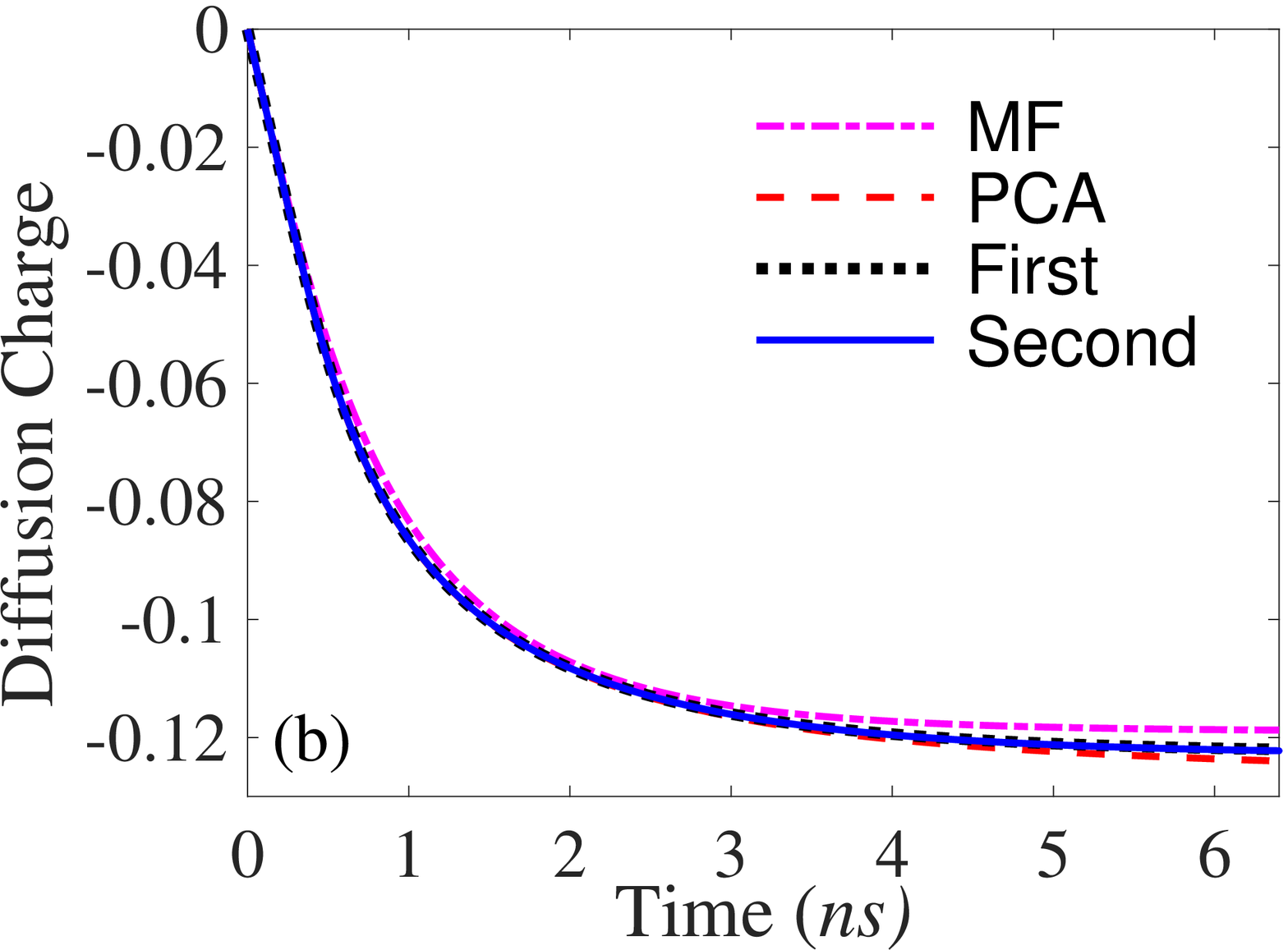}\\
\includegraphics[width=0.495\textwidth]{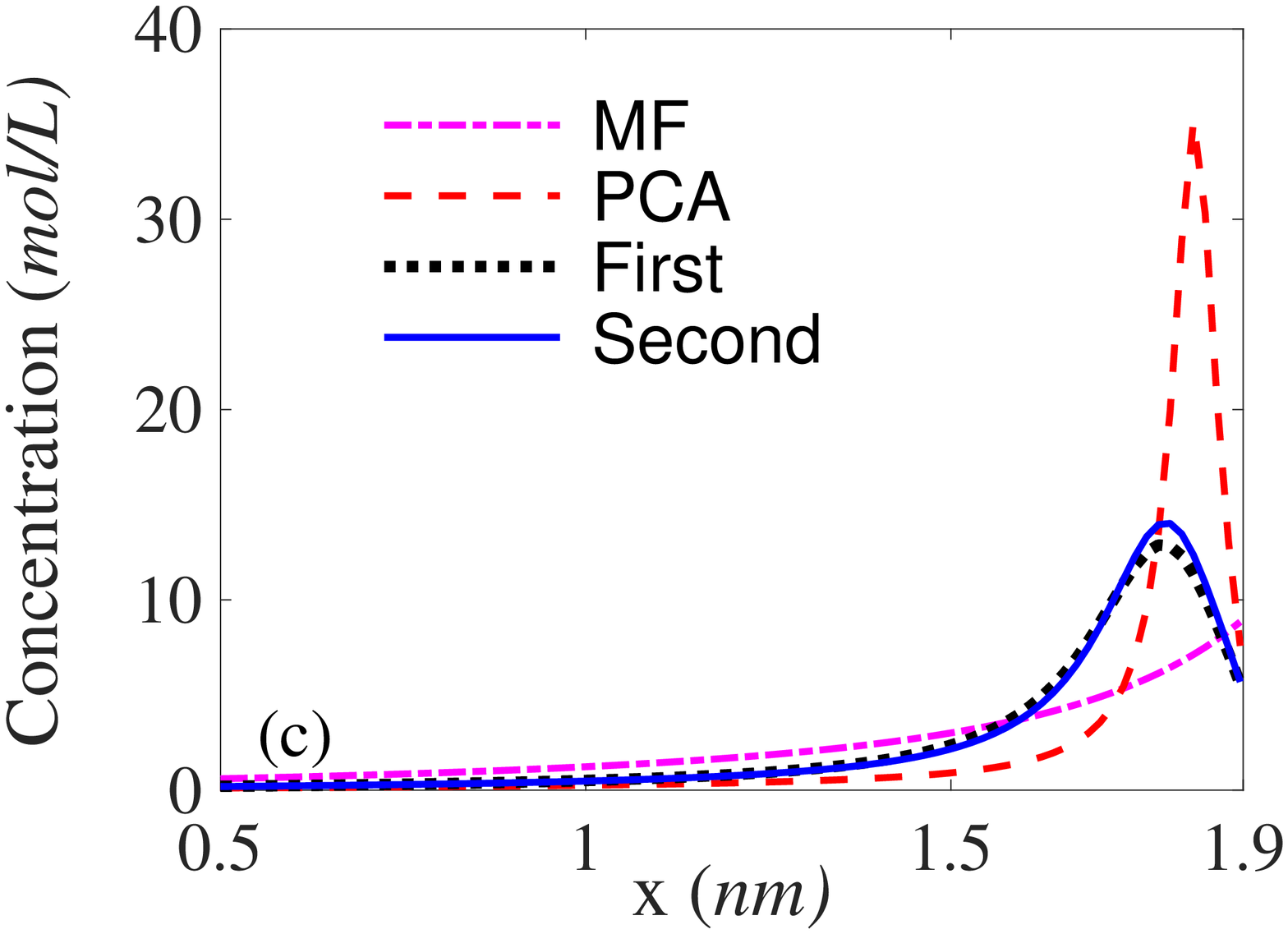}
\includegraphics[width=0.495\textwidth]{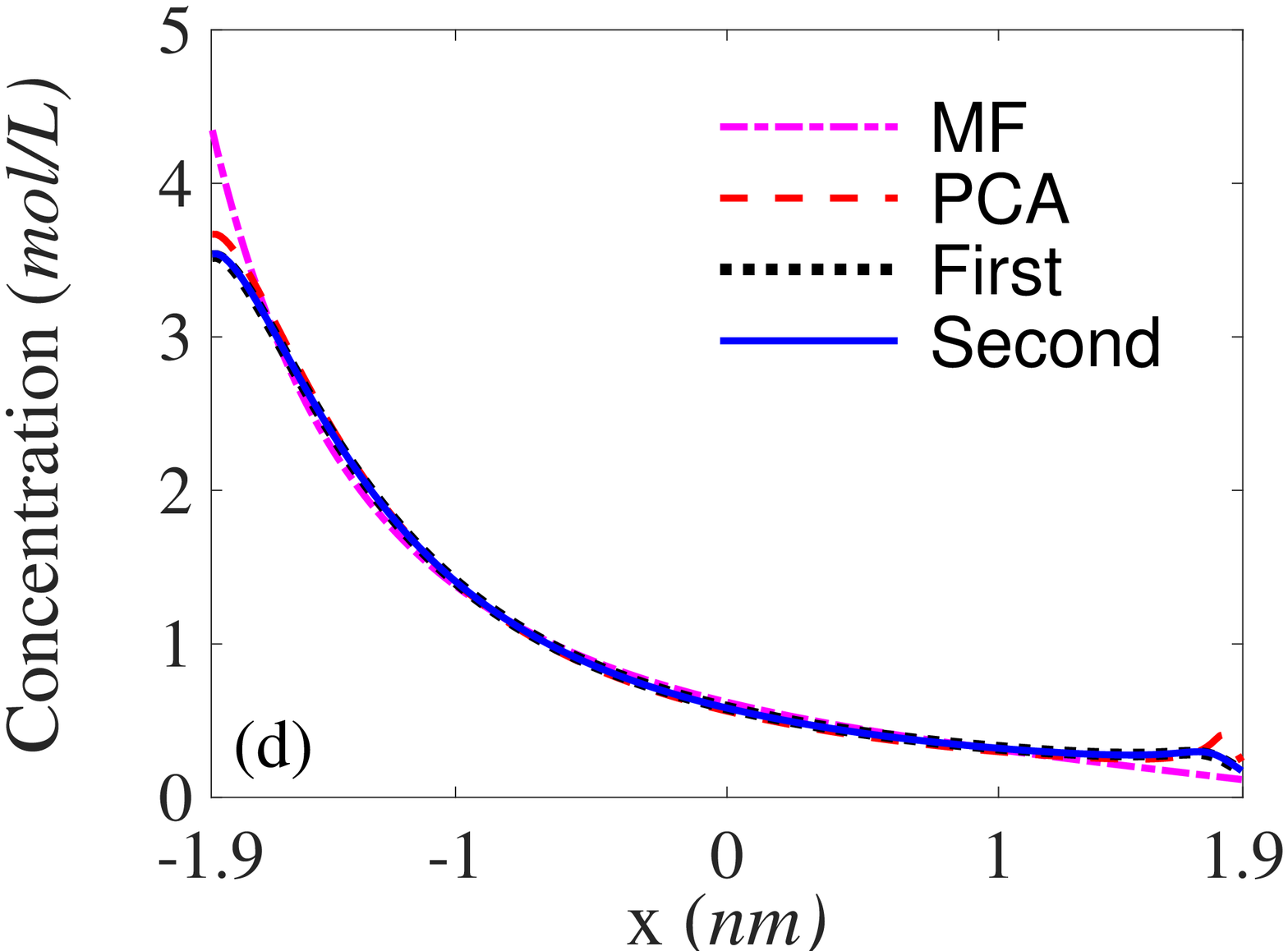}
\caption{Total cation mass, diffusion charge, and equilibrium-state distribution predicted from traditional and modified PNP equations. (a) Deviation of the total cation mass from the exact solution as function of time $t$. (b) Total diffused charge as function of time $t$. (c) The cation distributions at $t=6.4ns$ (the data for $x<0.5$ are not shown as the ion concentration of divalent cations is almost zero near the left electrode). (d) The anion distribution at $t=6.4ns$. }
\label{acc}
\end{center}
\end{figure*}

In our example, we consider an electrolyte with 2:-1 salt between two electrodes with separation of $2L=4 nm$. We take room temperature $T=300K$, the dielectric constant of water $\varepsilon=80\varepsilon_0$ and that of electrodes $\varepsilon=2\varepsilon_0$, ion radius $a=0.1nm$, diffusion coefficient $D=1nm^2/ns$ and the boundary voltages $\phi(\pm L)=\mp 50mV$. The initial values for the anion density is $c_-(x, t=0)=66mM$, and for the cation density $c_+(x, t=0)=33mM$ over the whole domain. The computational domain for the generalized Debye-H\"uckel equation is from $-4nm$ to $4nm$ with periodic conditions on the boundary. The space and time step sizes are $h=0.016 nm$ and $k=0.00032 ns$. We calculate the results up to time $T=6.4 ns$.

In Figure \ref{acc} (a), we calculate the total cation mass, $\sum_j c_+^j h$,  between two electrodes, and present the deviation from the exact value as function of time $t$. The results for anion distribution are similar and thus not shown. The mean-field PNP and the modified PNP with different approximations for correlation energy are all ploted. It can be clearly observed that all the curves only slightly deviate from zero under the machine error, and the numerical scheme preserves the mass conservation, demonstrating the attractive performance of the mass conservation scheme. Figure \ref{acc} (b) displays the the total diffusion charge, $\rho(t)=\int_{-L}^0 \sum_i z_i c_i(x,t) dx$, as function of time $t$, which describes the charges moving from the right half to the left due to the voltage bias on the boundaries. It can be observed that the dynamics of all the four models are quite similar when $t$ is small and the three self modified PNPs predict more diffusion charge which mobiles to the opposite half with a slower dynamics. And the PCA has the largest amount of diffusion charge, slightly bigger than those from the first- and second-order asymptotics. It can be noted that the PCA continues to increase the diffusion charge after the other three models reach equilibrium, indicating the possibility of unstable solution of the PCA. The instability of the PCA can be clearly shown in their ionic distributions at time $T$. The cation and anion concentrations are presented in Figure \ref{acc} (c) and (d), where the slight difference in the first- and second-order asymptotics can be observed in the divalent ion distribution. The mean-field PNP shows monotone ionic distribution and underestimates the counterion condensation due to the neglect of ionic correlation \cite{GNS:RMP:2002,Levin:RPP:2002}. Oppositely, the use of the PCA strongly overestimates the correlation energy. This ascribes to the lack of excluded-volume effect in the PCA, leading to unphysical enhancement of the Coulomb correlation. The depletion zone near the interface is due to the dielectric-boundary effect \cite{XML:PRE:2014} which has been ignored by the mean-field PNP equations. If we keep the calculation up to a longer time, the modified PNP with the PCA would blow up, which is consistent with Ref. \cite{XuMaggs:JCP:14}. These results illustrate that it is essential to use the model with the ion-size effect to remedy this instability for systems with multivalent ions.

\section{Conclusion}

In conclusion, we have developed modified PNP model with contribution from Coulomb correlation where
the excluded volume of ions is taken into account. We start from the free energy functional where
asymptotic approximations for the correlation energy with the ionic size as a small parameter are discussed.
The modified PNP equations are then resulted by the variation with the approximate energy functional.
By numerical example, we demonstrate the new model is useful in predicting the ion structure and dynamics
near interfaces when the Coulomb correlation plays role in the application systems.

\section*{Acknowledgement}
P. L. and Z. X. acknowledge the support from the Natural Science Foundation of China (Grant Nos: 11101276 and 91130012) and the Central Organization Department of China, and the HPC Center of SJTU. X. J. acknowledges the support from the Natural Science Foundation of China (Grant Nos. 11271018 and 91230203) and the National Center for Mathematics and Interdisciplinary Science at CAS. Part of this work was completed during the visits of P. L. and Z. X. to the Beijing Computational Science Research Center in the summer of 2016. The comments from Professor Wei Cai and the hospitality of the institution during the visits are greatly appreciated.


\begin{thebibliography}{10}

\bibitem{FPP+:RMP:2010}
\textsc{R.~H. French, V.~A. Parsegian, R.~Podgornik, R.~F. Rajter, A.~Jagota,
  J.~Luo, D.~Asthagiri, M.~K. Chaudhury, Y.-M. Chiang, S.~Granick, S.~Kalinin,
  M.~Kardar, R.~Kjellander, D.~C. Langreth, J.~Lewis, S.~Lustig, D.~Wesolowski,
  J.~S. Wettlaufer, W.-Y. Ching, M.~Finnis, F.~Houlihan, O.~A. von Lilienfeld,
  C.~J. van Oss, and T.~Zemb}, \emph{Long range interactions in nanoscale
  science}, Rev. Mod. Phys., 82 (2010), pp.~1887--1944.

\bibitem{Schoch:RMP:08}
\textsc{R.~B. Schoch, J.~Han, and P.~Renaud}, \emph{Transport phenomena in
  nanofluidics}, Rev. Mod. Phys., 80 (2008), pp.~839--883.

\bibitem{BTA:PRE:04}
\textsc{M.~Z. Bazant, K.~Thornton, and A.~Ajdari}, \emph{Diffuse-charge
  dynamics in electrochemical systems}, Phys. Rev. E, 70 (2004), p.~021506.

\bibitem{MRS:S:90}
\textsc{P.~Markowich, C.~Ringhofer, and C.~Schimeiser}, \emph{Semiconductor},
  Springer, 1990.

\bibitem{Eisenberg:ACPip:2011a}
\textsc{B.~Eisenberg}, \emph{Crowded charges in ion channels}, Adv. Chem.
  Phys., 148 (2011), pp.~77--223.

\bibitem{MRA:BJ:03}
\textsc{A.~B. Mamonov, R.~D. Coalson, A.~Nitzan, and M.~G. Kurnikova},
  \emph{The role of the dielectric barrier in narrow biological channels: A
  novel composite approach to modeling single-channel currents}, Biophys. J.,
  84 (2003), pp.~3646 -- 3661.

\bibitem{CKC:BJ:03}
\textsc{B.~Corry, S.~Kuyucak, and S.-H. Chung}, \emph{Dielectric self-energy in
  {Poisson-Boltzmann} and {Poisson-Nernst-Planck} models of ion channels},
  Biophys. J., 84 (2003), pp.~3594--3606.

\bibitem{KBA:PRE:2007}
\textsc{M.~S. Kilic, M.~Z. Bazant, and A.~Ajdari}, \emph{Steric effects in the
  dynamics of electrolytes at large applied voltages. {II.} {M}odified
  {P}oisson--{N}ernst--{P}lanck equations}, Phys. Rev. E, 75 (2007), p.~021503.

\bibitem{EHL:JCP:2010}
\textsc{B.~Eisenberg, Y.~Hyon, and C.~Liu}, \emph{Energy variational analysis
  of ions in water and channels: Field theory for primitive models of complex
  ionic fluids}, J. Chem. Phys., 133 (2010), p.~104104.

\bibitem{LZ:BJ:2011}
\textsc{B.~Lu and Y.~Zhou}, \emph{{Poisson-Nernst-Planck} equations for
  simulating biomolecular diffusion-reaction processes {II}: Size effects on
  ionic distributions and diffusion-reaction rates}, Biophys. J., 100 (2011),
  pp.~2475--2485.

\bibitem{JL:2012:JDDE}
\textsc{S.~Ji and W.~Liu}, \emph{{Poisson-Nernst-Planck} systems for ion flow
  with density functional theory for hard-sphere potential: {I-V} relations and
  critical potentials. {Part I: Analysis}}, J. Dyn. Diff. Equ., 24 (2012),
  pp.~955--983.

\bibitem{LTZ:2012:JDDE}
\textsc{W.~Liu, X.~Tu, and M.~Zhang}, \emph{{Poisson-Nernst-Planck} systems for
  ion flow with density functional theory for hard-sphere potential: {I-V}
  relations and critical potentials. {Part II: Numerics}}, J. Dyn. Diff. Equ.,
  24 (2012), pp.~985--1004.

\bibitem{Gillespie:MN:14}
\textsc{D.~Gillespie}, \emph{A review of steric interactions of ions: Why some
  theories succeed and others fail to account for ion size}, Microfluid.
  Nanofluid.,  (2014), pp.~1--22.

\bibitem{Frydel:ACP:2016}
\textsc{D.~Frydel}, \emph{Mean-field electrostatics beyond the point-charge
  description}, Adv. Chem. Phys., 160 (2016), pp.~209--260.

\bibitem{Giera2013}
\textsc{B.~Giera, N.~Henson, E.~M. Kober, T.~M. Squires, and M.~S. Shell},
  \emph{Model-free test of local-density mean-field behavior in electric double
  layers}, Phys. Rev. E, 88 (2013), p.~011301.

\bibitem{Giera2015}
\textsc{B.~Giera, N.~Henson, E.~M. Kober, M.~S. Shell, and T.~M. Squires},
  \emph{Electric double-layer structure in primitive model electrolytes:
  Comparing molecular dynamics with local-density approximations}, Langmuir, 31
  (2015), pp.~3553--3562.

\bibitem{Wu:JCP:2002}
\textsc{Y.-X. Yu and J.~Wu}, \emph{Structures of hard-sphere fluids from a
  modified fundamental-measure theory}, J. Chem. Phys., 117 (2002),
  pp.~10156--10164.

\bibitem{Roth:JPC:2002}
\textsc{R.~Roth, R.~Evans, A.~Lang, and G.~Kahl}, \emph{Fundamental measure
  theory for hard-sphere mixtures revisited: the white bear version}, J. Phys.:
  Condens. Matter, 14 (2002), p.~12063.

\bibitem{BKS+:ACIS:2009}
\textsc{M.~Z. Bazant, M.~S. Kilic, B.~D. Storey, and A.~Ajdari}, \emph{Towards
  an understanding of induced-charge electrokinetics at large applied voltages
  in concentrated solutions}, Adv. Colloid Interface Sci., 152 (2009),
  pp.~48--88.

\bibitem{BSK:PRL:2011}
\textsc{M.~Z. Bazant, B.~D. Storey, and A.~A. Kornyshev}, \emph{Double layer in
  ionic liquids: Overscreening versus crowding}, Phys. Rev. Lett., 106 (2011),
  p.~046102.

\bibitem{bazant2012}
\textsc{B.~D. Storey and M.~Z. Bazant}, \emph{{Effects of electrostatic
  correlations on electrokinetic phenomena}}, Phys. Rev. E, 86 (2012),
  p.~056303.

\bibitem{LE:JPCB:13}
\textsc{J.-L. Liu and B.~Eisenberg}, \emph{Correlated ions in a calcium channel
  model: {A Poisson-Fermi} theory}, J. Phys. Chem. B, 117 (2013),
  pp.~12051--12058.

\bibitem{Liu2016}
\textsc{J.-L. Liu, H.-j. Hsieh, and B.~Eisenberg}, \emph{{Poisson--Fermi
  Modeling of the Ion Exchange Mechanism of the Sodium/Calcium Exchanger}}, J.
  Phys. Chem. B, 120 (2016), pp.~2658--2669.

\bibitem{AM:CJU:1986}
\textsc{S.~M. Avdeev and G.~A. Martynov}, \emph{Influence of image forces on
  the electrostatic component of the disjoining pressure}, Colloid J. USSR, 48
  (1986), pp.~535--542.

\bibitem{FM:PRE:2016}
\textsc{D.~Frydel and M.~Ma}, \emph{Density functional formulation of the
  random-phase approximation for inhomogeneous fluids: {A}pplication to the
  {G}aussian core and {C}oulomb particles}, Phys. Rev. E, 93 (2016), p.~062112.

\bibitem{podgornik1989jcp}
\textsc{R.~Podgornik}, \emph{Electrostatic correlation forces between surfaces
  with surface specific ionic interactions}, J. Chem. Phys., 91 (1989),
  pp.~5840--5849.

\bibitem{NO:EPJE:2000}
\textsc{R.~R. Netz and H.~Orland}, \emph{Beyond {Poisson-Boltzmann}:
  Fluctuation effects and correlation functions}, Eur. Phys. J. E, 1 (2000),
  pp.~203--214.

\bibitem{NO:EPJE:2003}
\textsc{R.~R. Netz and H.~Orland}, \emph{Variational charge renormalization in
  charged systems}, Eur. Phys. J. E, 11 (2003), pp.~301--311.

\bibitem{XuMaggs:JCP:14}
\textsc{Z.~Xu and A.~Maggs}, \emph{Solving fluctuation-enhanced
  {Poisson}-{Boltzmann} equations}, J. Comput. Phys., 275 (2014), pp.~310--322.

\bibitem{Wang:PRE:2010}
\textsc{Z.~G. Wang}, \emph{Fluctuation in electrolyte solutions: The self
  energy}, Phys. Rev. E, 81 (2010), p.~021501.

\bibitem{Bikerman:PM:1942}
\textsc{J.~J. Bikerman}, \emph{Structure and capacity of the electrical double
  layer}, Philos. Mag., 33 (1942), pp.~384--397.

\bibitem{Grahame:CR:1947}
\textsc{D.~C. Grahame}, \emph{The electrical double layer and the theory of
  electrocapillarity}, Chem. Rev., 32 (1947), pp.~441--501.

\bibitem{hatlo2012}
\textsc{M.~M. Hatlo, R.~van Roij, and L.~Lue}, \emph{The electric double layer
  at high surface potentials: The influence of excess ion polarizability},
  Europhys. Lett., 97 (2012), p.~28010.

\bibitem{LWZ:CMS:14}
\textsc{B.~Li, J.~Wen, and S.~Zhou}, \emph{Mean-field theory and computation of
  electrostatics with ionic concentration dependent dielectrics}, Commun. Math.
  Sci., 14(1) (2016), pp.~249--271.

\bibitem{Guan:PRE:16}
\textsc{X.~Guan, M.~Ma, Z.~Gan, Z.~Xu, and B.~Li}, \emph{Hybrid {Monte Carlo}
  and continuum modeling of electrolyte with concentration-induced dielectric
  variations}, Phys. Rev. E, 94 (2016), p.~053312.

\bibitem{Vincze2010}
\textsc{J.~Vincze, M.~Valisk{\'o}, and D.~Boda}, \emph{The nonmonotonic
  concentration dependence of the mean activity coefficient of electrolytes is
  a result of a balance between solvation and ion-ion correlations}, J. Chem.
  Phys., 133 (2010), p.~154507.

\bibitem{liu2015poisson}
\textsc{J.-L. Liu and B.~Eisenberg}, \emph{Poisson--fermi model of single ion
  activities in aqueous solutions}, Chem. Phys. Lett., 637 (2015), pp.~1--6.

\bibitem{DH:PZ:1923b}
\textsc{P.~Debye and E.~H\"uckel}, \emph{The theory of electrolytes. {I.
  Lowering} of freezing point and related phenomena}, Phys. Zeitschr., 24
  (1923), pp.~185--206.

\bibitem{WangRui:JCP:13}
\textsc{R.~Wang and Z.-G. Wang}, \emph{Effects of image charges on double layer
  structure and forces}, J. Chem. Phys., 139 (2013), p.~124702.

\bibitem{MX:JCP:14}
\textsc{M.~Ma and Z.~Xu}, \emph{Self-consistent field model for strong
  electrostatic correlations and inhomogeneous dielectric media}, J. Chem.
  Phys., 141 (2014), p.~244903.

\bibitem{WW:JCP:2015}
\textsc{R.~Wang and Z.-G. Wang}, \emph{On the theoretical description of weakly
  charged surfaces}, J. Chem. Phys., 142 (2015), p.~104705.

\bibitem{LMX:CiCP:17}
\textsc{P.~Liu, M.~Ma, and Z.~Xu}, \emph{Understanding depletion induced
  like-charge attraction from self-consistent field model}, Commun. Comput.
  Phys., 22 (2017), pp.~95--111.

\bibitem{Flavell:JCE:14}
\textsc{A.~Flavell, M.~Machen, B.~Eisenberg, J.~Kabre, C.~Liu, and X.~Li},
  \emph{A conservative finite difference scheme for {Poisson-Nernst-Planck}
  equations}, J. Comput. Electron., 13 (2014), pp.~235--249.

\bibitem{LW:JCP:14}
\textsc{H.~Liu and Z.~Wang}, \emph{A free energy satisfying finite difference
  method for {Poisson-Nernst-Planck} equations}, J. Comput. Phys., 268 (2014),
  pp.~363 -- 376.

\bibitem{Xu:2014:CMS}
\textsc{S.~Xu, P.~Sheng, and C.~Liu}, \emph{An energetic variational approach
  for ion transport}, Commun. Math. Sci., 12 (2014), pp.~779--789.

\bibitem{HM::2006}
\textsc{J.~P. Hansen and I.~R. McDonald}, \emph{Theory of simple liquids},
  Academic Press, Amsterdam, 2006.

\bibitem{loeb1951}
\textsc{A.~L. Loeb}, \emph{An interionic attraction theory applied to the
  diffuse layer around colloid particles. {I}}, J. Colloid Sci., 6 (1951),
  pp.~75--91.

\bibitem{XML:PRE:2014}
\textsc{Z.~Xu, M.~Ma, and P.~Liu}, \emph{Self-energy-modified
  {Poisson-Nernst-Planck} equations: {WKB} approximation and finite-difference
  approaches}, Phys. Rev. E, 90 (2014), p.~013307.

\bibitem{guntelberg1926}
\textsc{E.~G\"untelberg}, \emph{Interaction of ions}, Z. Phys. Chem., 123
  (1926), pp.~199--247.

\bibitem{Bockris1998}
\textsc{J.~O. Bockris and A.~K.~N. Reddy}, \emph{Volume 1: Modern
  Electrochemistry: Ionics}, Springer US, 1998.

\bibitem{LYM+:ATMS:2011}
\textsc{L.~Lin, C.~Yang, J.~C. Meza, J.~Lu, L.~Ying, and W.~E},
  \emph{Sel{I}nv---{An} algorithm for selected inversion of a sparse symmetric
  matrix}, ACM Trans. Math. Softw., 37 (2011), pp.~40:1--40:19.

\bibitem{FXH:JCP:14}
\textsc{F.~Fahrenberger, Z.~Xu, and C.~Holm}, \emph{Simulation of electric
  double layers around charged colloids in aqueous solution of variable
  permittivity}, J. Chem. Phys., 141 (2014), p.~064902.

\bibitem{IBR:CPC:1998}
\textsc{W.~Im, D.~Beglov, and B.~Roux}, \emph{Continuum solvation model:
  Computation of electrostatic forces from numerical solutions to the
  {Poisson-Boltzmann} equation}, Comput. Phys. Commun., 111 (1998), pp.~59--75.

\bibitem{Slotboom1969}
\textsc{J.~W. Slotboom}, \emph{Iterative scheme for 1- and 2-dimensional
  d.c.-transistor simulation}, Electron. Lett., 26 (1969), pp.~677--678.

\bibitem{GNS:RMP:2002}
\textsc{A.~Y. Grosberg, T.~T. Nguyen, and B.~I. Shklovskii}, \emph{The physics
  of charge inversion in chemical and biological systems}, Rev. Mod. Phys., 74
  (2002), pp.~329--345.

\bibitem{Levin:RPP:2002}
\textsc{Y.~Levin}, \emph{Electrostatic corrections: from plasma to biology},
  Rep. Prog. Phys., 65 (2002), pp.~1577--1632.

\end{thebibliography}

\end{document}